\let\NAT@parse\undefined
\def\real{\mathbb{R}}
\newcommand{\until}[1]{\{1,\dots, #1\}}
\newcommand{\subscr}[2]{#1_{\textup{#2}}}
\newcommand{\setdef}[2]{\{#1 \; | \; #2\}}
\newcommand{\map}[3]{#1: #2 \rightarrow #3}
\newcommand{\argmin}{\operatorname{argmin}}
\newcommand\oprocendsymbol{\hbox{$\square$}}
\newcommand\oprocend{\relax\ifmmode\else\unskip\hfill\fi\oprocendsymbol}
\newcommand\bit[1]{\textit{\textbf{#1}}}
\def \bs {\boldsymbol}
\def \mc {\mathcal}
\newtheorem{theorem}{Theorem}
\newtheorem{proposition}{Proposition}
\newtheorem{lemma}{Lemma}
\newtheorem{corollary}{Corollary}
\newtheorem{remark}{Remark}
\newtheorem{assumption}{Assumption}
\title{SIS Epidemic Model under Mobility on Multi-layer Networks
\thanks{This work was supported by ARO grant W911NF-18-1-0325.}
}
\author{Vishal Abhishek$^{1}$ and Vaibhav Srivastava$^{2}$
\thanks{$^{1}$Vishal Abhishek is with the Department of Mechanical Engineering,
       Michigan State University,
       East Lansing, MI 48824-1226, USA
        {\tt\small abhishe3 at egr.msu.edu}}%
\thanks{$^{2}$Vaibhav Srivastava is with the Electrical and Computer Engineering, Michigan State University, East Lansing, MI 48824-1226, USA
        {\tt\small vaibhav at egr.msu.edu}}%
}
\begin{document}

\maketitle

\begin{abstract}
We study the influence of heterogeneous mobility patterns in a population on the SIS epidemic model. In particular, we consider a patchy environment in which each patch comprises individuals belonging the different classes, e.g., individuals in different socio-economic strata. We model the mobility of individuals of each class across different patches through an associated Continuous Time Markov Chain (CTMC). The topology of these multiple CTMCs constitute the multi-layer network of mobility. At each time, individuals move in  the multi-layer network of spatially-distributed patches according  to  their CTMC  and subsequently  interact  with  the  local individuals in the patch  according to  an  SIS  epidemic model. We derive a deterministic continuum limit model describing these mobility-epidemic interactions. We establish the existence of a Disease-Free Equilibrium (DFE) and an Endemic Equilibrium (EE)  under  different  parameter regimes and establish their (almost) global asymptotic stability using  Lyapunov  techniques.  We derive simple sufficient conditions that highlight the influence of the multi-layer network on the stability of DFE.  
Finally, we numerically illustrate that the derived model provides a good approximation to the stochastic model with a finite population and also demonstrate the influence of the multi-layer network structure  on  the  transient  performance.
\end{abstract}

\section{Introduction}

Contagion dynamics are used to model a variety of phenomena such as spread of influence, disease and rumors. 
Epidemic propagation models are  a class of contagion models that have been used in the context of disease spread~\cite{anderson1992infectious, DE-JK:10}, spread of computer viruses~\cite{kleinberg2007computing, wang2009understanding}, routing in mobile communication networks~\cite{zhang2007performance}, and spread of rumors~\cite{jin2013epidemiological}. Epidemic propagation in patchy environments refers to the epidemic spread process in an environment comprised of disjoint  spatially distributed regions (patches). In these models, individuals interact within each patch and also move across different patches according to a CTMC. 


In this paper, we consider a generalized epidemic propagation model in a patchy environment in which individuals within each patch belong to multiple classes, and individuals within each class move according to an associated CTMC. This leads to a multi-layer mobility model and we study its interaction with epidemic propagation. 
Using Lyapunov techniques, we characterize the steady state behavior of the model under different parameter regimes and characterize the influence of mobility on epidemic dynamics. 



Epidemic models have been extensively studied in the literature. The two most widely studied models are SIS (Susceptible-Infected-Susceptible) and SIR (Susceptible-Infected-Recovered)  models, wherein individuals are classified into one of the three categories: susceptible, infected or recovered. In contrast to the classical SIS/SIR models where the dynamics of the fraction of the population in each category~\cite{DE-JK:10} is studied, the networked models consider patches clustered into different nodes, and the patch-level dynamics is determined by the local SIS/SIR interactions as well as the interactions with neighboring patches in the network graph~\cite{AnalysisandControlofEpidemics_ControlSysMagazine_Pappas, meiBullo2017ReviewPaper_DeterministicEpidemicNetworks, fall2007epidemiological, khanafer_Basar2016stabilityEpidemicDirectedGraph}. While most of studies on SIR/SIS epidemic models focus on continuous-time dynamics, the authors in~\cite{Hassibi2013globaldynEpidemics,Ruhi_Hassibi2015SIRS} study network epidemic dynamics in discrete time setting. 

Some common generalizations of the SIR/SIS models include: SEIR model~\cite{ AnalysisandControlofEpidemics_ControlSysMagazine_Pappas, mesbahi2010graph}, where an additional classification ``exposed" is introduced, SIRS~\cite{DE-JK:10,Ruhi_Hassibi2015SIRS}, where individuals get temporary immunity after recovery and then become susceptible again, and SIRI~\cite{gomez2015abruptTransitionsSIRI, pagliara_NaomiL2018bistability,pagliara2019adaptive}, where after recovery, agents become susceptible with a different rate of infection. The network epidemic dynamics have also been studied for time-varying networks~ \cite{bokharaie2010_EpidemicTVnetwork, Preciado2016_EpidemicTVnetwork,  Beck2018_EpidemicTimeVaryingNetwork}.


The terms population dispersal and network mobility have been used interchangeably in the literature. Epidemic spread under mobility has been modeled and analyzed as reaction-diffusion process in \cite{colizza2008epidemicReaction-DiffusionMetapopuln, Saldana2008continoustime_Reaction-DiffnMetapopln}.
Epidemic spread in a patchy environment with population dispersal has been modeled and studied in  \cite{wang2004epidemic, jin2005effect, li2009global}. In these works, the mobility or dispersal patterns depend on the state (susceptible or infected) of the individuals, and conditions for global stability of the disease-free equilibrium and an endemic equilibrium are derived. When the mobility patterns are identical for all individuals, then these models reduce to a model similar to the single-layer version of the multi-layer model studied in this paper. 


Epidemic spread with mobility on a multiplex network of patches has been modeled and studied in \cite{soriano2018spreading_MultiplexMobilityNetwork_Metapopln}. Authors of this work consider a discrete-time model in which, at each time, individuals randomly move to another node, participate in epidemic propagation and then return to their home node. 
A multi-species SEIR epidemic model with population dispersal has been analyzed in \cite{arino2005multi} and conditions for the global stability of a disease-free equilibrium are derived. Stability results for endemic equilibrium for the single species case are also derived. The population dispersal model in \cite{arino2005multi} is identical to the multi-layer mobility model studied in this paper. In contrast to \cite{arino2005multi}, we focus on SIS epidemic model and completely characterize the properties of the model, including existence, uniqueness, and stability of both disease-free and endemic equilibria.


In this paper, we study a coupled epidemic-mobility model comprised of a set of patches located in a multi-layer network of  spatially distributed regions. Individuals within each patch (region) can travel across regions according to a Continuous Time Markov Chain (CTMC) characterising their mobility pattern and upon reaching a new region participate in the local SIS epidemic process. We extend the results for the deterministic network SIS model~\cite{fall2007epidemiological, Hassibi2013globaldynEpidemics, khanafer_Basar2016stabilityEpidemicDirectedGraph, meiBullo2017ReviewPaper_DeterministicEpidemicNetworks} to the proposed model and characterize its steady state and stability properties.  

The major contributions of this paper are fourfold. First, we derive a deterministic continuum limit model describing the interaction of the SIS dynamics with the multi-layer Markovian mobility dynamics.  The obtained model is similar to the model studied in \cite{arino2005multi}; however, our presentation derives the model from first principles. 
%
%
Second, we rigorously characterize the existence and stability of the equilibrium points of the derived model under different parameter regimes. 
Third, for the stability of the disease-free equilibrium, 
we determine some useful sufficient conditions which highlight the influence of multi-layer mobility on the steady state behavior of the dynamics. 
Fourth, we numerically illustrate that the derived model is a good approximation to the stochastic model with a finite population. We also illustrate the influence of the network topology on the transient properties of the model. 

    
    
    




The remainder of this paper is organized in the following way. In Section \ref{Sec: Mobility Modeling as Continous-Time Markov Process}, we derive the epidemic model under multi-layer mobility as a continuum limit to two interacting stochastic processes. In Section \ref{sec: analysis}, we characterize the existence and stability of disease-free and endemic equilibrium for the derived model. In Section \ref{Sec: numerical studies}, we illustrate our results using numerical examples. Finally, we conclude in Section \ref{Sec: conclusions}. 
\medskip

\noindent
{\it Mathematical notation:}
For any two real vectors $\bs x$, $\bs y \in \real^n$, we denote:\\
$\bs x \gg \bs y$, if $x_i > y_i$ for all $i \in \until{n}$,\\
$\bs x \geq \bs y$, if $x_i \geq y_i$ for all $i \in \until{n}$,\\
$\bs x > \bs y$, if $x_i \geq y_i$ for all $i \in \until{n}$ and $\bs x \neq \bs y$.\\
For a square matrix $G$, radial abscissa $\map{\mu}{\real^{n\times n}}{\real}$ is defined by 
\[
\mu(G) = \max \setdef{\mathrm{Re}(\lambda)}{\lambda \text{ is an eigenvalue of $G$}}, 
\]
where $\mathrm{Re}(\cdot)$ denotes the real part of the argument. 
Spectral radius $\rho$ is defined by
\[
\rho(G) = \max \setdef{|\lambda|}{\lambda \text{ is an eigenvalue of $G$}}, 
\]
where $|(\cdot)|$ denotes the absolute value of the argument.
For any vector $\bs x = [x_1,\dots,x_n]^\top$, $X=\operatorname{diag}(\bs x)$ is a diagonal matrix with $X_{ii}=x_i$ for all $i \in \until{n}$.

\section{SIS Model under Multi-layer Markovian Mobility} \label{Sec: Mobility Modeling as Continous-Time Markov Process}

We consider $n$ sub-population of individuals that are located in distinct spatial regions (patches). We assume the individuals within each patch can be classified into two categories: (i) susceptible, and (ii) infected. We assume that the individuals within each patch are further grouped into $m$ classes which decide how they travel to other patches. Let the connectivity of these patches corresponding to the mobility pattern of each class $\alpha \in \until{m}$ be modeled by a digraph $\mc G^\alpha = (\mc V, \mc E^\alpha)$, where $\mc V =\until{n}$ is the node (patch) set and $\mc E^\alpha \subset \mc V \times \mc V$ is the edge set. We model the mobility of individuals on each graph $\mc G^\alpha$ using a Continuous Time Markov Chain (CTMC) with generator matrix $Q^\alpha$, whose $(i,j)$-th entry is $q^\alpha_{ij}$. The entry $q^\alpha_{i j} \ge 0$, $i \ne j$, is the instantaneous transition rate from node $i$ to node $j$, and $-q^\alpha_{ii}= \nu^\alpha_{i}$ is the total rate of transition out of node $i$, i.e., $\nu^\alpha_{i} = \sum_{j \ne i}q^\alpha_{i j}$. Here, $q^\alpha_{ij} >0$, if $(i,j) \in \mc E^\alpha$; and $q^\alpha_{ij}=0$, otherwise. Let $x^\alpha_{i}(t)$ be the number of individuals of class $\alpha$ in patch $i$ at time $t$. Let $p^\alpha_i \in [0,1]$ (respectively, $1-p^\alpha_i$) be the fraction of infected (respectively, susceptible) sub-population of class $\alpha$ at patch $i$. Define $\bs p^\alpha := [p^\alpha_{1},\dots,p^\alpha_n]^\top$, $\bs x^\alpha := [x^\alpha_{1},\dots,x^\alpha_n]^\top$, $\bs p := [(\bs p^1)^\top,\dots, (\bs p^m)^\top]^\top$ and $\bs x := [ (\bs x^1)^\top,\dots, (\bs x^m)^\top]^\top$.

We model the interaction of mobility with the epidemic process as follows. At each time $t$, individuals of each class $\alpha$ within each node move on graph $\mc G^\alpha$ according to the CTMC with generator matrix $Q^\alpha$ and then interact with individuals within their current node according to an SIS epidemic process.  For the epidemic process at node $i$, let $\beta_i >0$ and $\delta_i \ge 0$ be the infection and recovery rate, respectively. We let $B^\alpha >0$ and $D^\alpha \ge 0$ be the positive and non-negative diagonal matrices with entries $\beta_i$ and $\delta_i$, $i \in \until{n}$, respectively. Let $B$ and $D$ be the positive and non-negative diagonal matrices with block-diagonal entries $B^\alpha$ and $D^\alpha$, $\alpha \in \until{m}$, respectively. Let $P^\alpha:=\operatorname{diag}(\bs p^\alpha)$ and $P:=\operatorname{diag}(\bs p)$. We now derive the continuous time dynamics that captures the interaction of mobility and the SIS epidemic dynamics. 




\begin{proposition}[\bit{SIS model under mobility}]\label{prop:model}
The dynamics of the fractions of the infected sub-population $\bs p$ and the number of individuals $\bs x^\alpha$ under multi-layer Markovian mobility model with generator matrices $Q^\alpha$, and infection and recovery matrices $B$ and $D$, respectively, are
\begin{subequations} \label{eq_Model}
\begin{align}
    \dot{\bs{p}} & = (BF(\bs{x})-D-L(\bs{x}))\bs{p} - P BF(\bs{x}) \bs{p} \label{eq_p}\\ 
    \dot{\bs{x}}^\alpha & = (Q^\alpha)^\top \bs{x}^\alpha, \label{eq_x}
\end{align}
\end{subequations}
where $L$ is an $nm \times nm$ block-diagonal matrix with block-diagonal terms $L^\alpha$, $\alpha \in \until{m}$, $L^\alpha(\bs{x})$ is a matrix with entries 
\[
l^\alpha_{ij}(\bs x) = \begin{cases} \sum_{j\neq i}q^\alpha_{j i} \frac{x^\alpha_{j}}{x^\alpha_{i}}, & \text{if } i = j, \\
-q^\alpha_{j i} \frac{x^\alpha_{j}}{x^\alpha_{i}}, & \text{otherwise},
\end{cases}
\]
 $F(\bs x) := [\bar{F}^\top(\bs x),\dots,\bar{F}^\top(\bs x)]^\top$ be a row-concatenated $nm \times nm$ matrix with each $n \times nm$ block-row as $\bar{F}(\bs x):=[F^1 (\bs x),\dots,F^m (\bs x)]$, and  $F^\alpha$ as a diagonal matrix with entries $f^\alpha_i (\bs x):=\frac{x^\alpha_i}{\sum_\alpha x^\alpha_i}$, i.e., the fraction of total population at node $i$ contributed by class $\alpha$.
\end{proposition}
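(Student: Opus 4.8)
The plan is to derive the two equations separately and then glue them through the identity relating infected fractions to absolute head-counts. Equation \eqref{eq_x} is immediate: under the class-$\alpha$ CTMC with generator $Q^\alpha$, the vector of expected occupancies $\bs x^\alpha$ obeys the Kolmogorov forward (master) equation, which in the column-vector convention reads $\dot{\bs x}^\alpha = (Q^\alpha)^\top \bs x^\alpha$; the key modeling observation is that epidemic status does not affect how an individual moves, so mobility acts on the total class-$\alpha$ count independently of infection. This also guarantees $x^\alpha_i(t) > 0$ for all $t$ whenever the chain has positive initial mass, so that the fractions $p^\alpha_i = I^\alpha_i/x^\alpha_i$ are well defined, where $I^\alpha_i := p^\alpha_i x^\alpha_i$ denotes the number of infected of class $\alpha$ at patch $i$.

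The heart of the derivation is to write balance laws for the absolute counts $I^\alpha_i$ and then pass to fractions. First I would argue that $I^\alpha_i$ changes for exactly two reasons: transport by mobility and the local SIS reaction. Since infected individuals move like everyone else, the mobility contribution to $\dot{\bs I}^\alpha$ is $(Q^\alpha)^\top \bs I^\alpha$. For the reaction at patch $i$, a susceptible of class $\alpha$ becomes infected at rate $\beta_i$ times the \emph{local} infected fraction, i.e.\ the fraction of infected among \emph{all} individuals present at $i$, namely $\frac{\sum_\gamma I^\gamma_i}{\sum_\gamma x^\gamma_i} = \sum_\gamma f^\gamma_i(\bs x)\, p^\gamma_i = (\bar F(\bs x)\bs p)_i$; multiplying by the susceptible count $(1-p^\alpha_i)x^\alpha_i$ gives the infection gain, while recovery removes $\delta_i I^\alpha_i$. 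This produces $\dot I^\alpha_i = ((Q^\alpha)^\top \bs I^\alpha)_i + \beta_i(1-p^\alpha_i)x^\alpha_i (\bar F(\bs x)\bs p)_i - \delta_i I^\alpha_i$, which encodes the interaction of the two stochastic processes in the large-population (continuum) limit.

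Finally I would pass to fractions via $\dot p^\alpha_i = \dot I^\alpha_i / x^\alpha_i - p^\alpha_i \dot x^\alpha_i / x^\alpha_i$, substituting $\dot x^\alpha_i = ((Q^\alpha)^\top \bs x^\alpha)_i$ from \eqref{eq_x}. The reaction terms immediately give $\beta_i(1-p^\alpha_i)(\bar F(\bs x)\bs p)_i - \delta_i p^\alpha_i$, which in block form is the stated $BF(\bs x)\bs p - PBF(\bs x)\bs p - D\bs p$. The mobility terms are the delicate part, and I expect them to be the main obstacle: the two diagonal out-rate contributions $-\nu^\alpha_i p^\alpha_i$ (coming from $\dot I^\alpha_i$) and $+\nu^\alpha_i p^\alpha_i$ (coming from $-p^\alpha_i \dot x^\alpha_i/x^\alpha_i$) must be shown to cancel exactly, after which the surviving inflow remainder $\sum_{j\neq i} q^\alpha_{ji}\frac{x^\alpha_j}{x^\alpha_i}(p^\alpha_j - p^\alpha_i)$ has to be matched term-by-term against the off-diagonal/diagonal split defining $l^\alpha_{ij}$, yielding precisely $-(L^\alpha(\bs x)\bs p^\alpha)_i$. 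It is this cancellation — which is exactly what replaces a raw $Q^\top$ by the state-dependent Laplacian-type matrix $L$ — that requires the most bookkeeping care. Once it is in place, stacking the $n$ patches and $m$ classes and recalling that $F$ is the row-concatenation of $m$ copies of $\bar F$ assembles the block identity \eqref{eq_p}.
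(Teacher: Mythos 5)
Your proposal is correct and follows essentially the same route as the paper: the paper writes the balance law for the infected head-count $x^\alpha_i p^\alpha_i$ over a small increment $h$ (mobility transport via $(Q^\alpha)^\top$ plus the local SIS reaction driven by $\bar p_i = (\bar F(\bs x)\bs p)_i$), divides by the population update, and takes $h\to 0^+$, which is exactly your quotient-rule computation $\dot p^\alpha_i = \dot I^\alpha_i/x^\alpha_i - p^\alpha_i \dot x^\alpha_i/x^\alpha_i$ in infinitesimal form. The cancellation of the $\nu^\alpha_i p^\alpha_i$ terms and the identification of the remainder $\sum_{j\neq i} q^\alpha_{ji}\tfrac{x^\alpha_j}{x^\alpha_i}(p^\alpha_j - p^\alpha_i)$ with $-(L^\alpha(\bs x)\bs p^\alpha)_i$ work out exactly as you anticipate.
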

\medskip

\begin{proof}
Consider a small time increment $h>0$ at time $t$. Then the number of individuals of class $\alpha$ present at node $i$ after the evolution of CTMC in time-interval $[t, t+h)$ is
\begin{equation} \label{eq_popln}
   x^\alpha_{i}(t+h)= x^\alpha_{i}(t)(1-\nu^\alpha_{i}h)+ \displaystyle\sum_{j\neq i}q^\alpha_{j i} x^\alpha_{j}(t)h + o(h) .
\end{equation}
After the mobility, individuals within each node interact according to SIS dynamics. Thus, the fraction of infected population present at node $i$ is: 
\begin{align} \label{eq_SIS_No of Infected}
 & x^\alpha_{i}(t+h) p^\alpha_{i}(t+h) \nonumber\\ 
 &=- x^\alpha_{i}(t) \delta_{i} p^\alpha_{i}(t)h + x^\alpha_{i}(t)\beta_{i} \bar p_{i}(t)(1-p^\alpha_{i}(t))h \nonumber\\
  &\quad + x^\alpha_{i}(t) p^\alpha_{i}(t)(1-\nu^\alpha_{i}h) + \displaystyle\sum_{j\neq i}q^\alpha_{j i} p^\alpha_{j} x^\alpha_{j}(t)h + o(h). 
\end{align}
where $\bar p_i$ is the fraction of infected population at node $i$ and is given as:
\[
 \bar p_i:= \sum_\alpha f^\alpha_i p^\alpha_i.
\]

The first two terms on the right side of \eqref{eq_SIS_No of Infected} correspond to epidemic process within each node, whereas the last two terms correspond to infected individuals coming from other nodes due to mobility. Using the expression of $x^\alpha_{i}$ from \eqref{eq_popln} in \eqref{eq_SIS_No of Infected} and taking the limit $h \to 0^+$ gives
\begin{multline} \label{eq_SIS_pi}
   \dot{p}^\alpha_{i}= - \delta_{i} p^\alpha_{i} + \beta_{i} \bar p_{i}(1-p^\alpha_{i})
   -l^\alpha_{ii} p^\alpha_{i} - \displaystyle\sum_{j\neq i} l^\alpha_{i j} p^\alpha_{j} .
\end{multline}
Writing above in vector form gives:
\begin{equation}
    \dot{\bs{p}}^\alpha = (-D^\alpha-L^\alpha(\bs{x}^\alpha))\bs{p}^\alpha + B^\alpha \bar F(\bs{x}) \bs p- P^\alpha B^\alpha \bar F(\bs{x}) \bs{p} .\label{eq_p_alpha}
\end{equation}
Similarly taking limits in \eqref{eq_popln} yields
\begin{equation} \label{eq_popln_det}
   \dot{x}^\alpha_{i} = -\nu^\alpha_{i} x^\alpha_{i} + \displaystyle\sum_{j\neq i}q^\alpha_{j i} x^\alpha_{j}.
\end{equation}
Rewriting \eqref{eq_SIS_pi} and \eqref{eq_popln_det} in vector form establishes the proposition. 
\end{proof}


\section{Analysis of SIS Model under Multi-layer Markovian Mobility} \label{sec: analysis}
In this section, we analyze the SIS model under multi-layer mobility~\eqref{eq_Model} under the following standard assumption: 
\begin{assumption} \label{Assumption:StrongConnectivity}
 Digraph $\mc G^\alpha$ is strongly connected, for all $\alpha \in \until{m}$, which is equivalent to matrices $Q^\alpha$ being irreducible \cite{Bullo-book_Networks}. \oprocend
\end{assumption}
Let $\bs v^\alpha$ be the right eigenvector of $(Q^\alpha)^\top$ associated with eigenvalue at $0$. We assume that $\bs v^\alpha$ is scaled such that its inner product with the associated left eigenvector $\bs 1_{n}$ is unity, i.e., $\bs 1_{n}^\top \bs v^\alpha = 1$. Define $\bs v:=[N^1 (\bs v^1)^\top, \dots, N^m (\bs v^m)^\top ]^\top $, where $N^\alpha$ is the total number of individuals belonging to class $\alpha$, for $\alpha \in \until{m}$. We call an equilibrium point $(\bs p^*, \bs x^*)$, an endemic equilibrium point, if at equilibrium the disease does not die out, i.e., $\bs p^* \neq 0$, otherwise, we call it a disease-free equilibrium point. Let $F^*:=F(\bs x^*)=F(\bs v)$ and $L^*:=L(\bs x^*)=L(\bs v)$. It can be verified that $F^*$ admits the splitting, $F^*= I-M$, where $I$ is the identity matrix of appropriate dimensions and $M$ is a Laplacian matrix. 

\begin{theorem}[\bit{Existence and Stability of Equilibria}] \label{thm:stability}
For the SIS model under multi-layer Markovian mobility~\eqref{eq_Model} with Assumption~\ref{Assumption:StrongConnectivity}, the following statements hold 
\begin{enumerate}
    \item if $\bs p(0) \in [0,1]^{nm}$, then $\bs p(t) \in [0,1]^{nm}$ for all $t>0$. Also, if $\bs p(0) > \bs 0_{nm}$, then $\bs p(t) \gg \bs 0_{nm}$ for all $t>0$;
    \item the model admits a disease-free equilibrium at $(\bs p^*, \bs x^*)= (\bs 0_{nm}, \bs v)$; 
    \item the model admits an endemic equilibrium at $(\bs p^*, \bs x^*) = (\bar{\bs p}, \bs v)$, $\bar{\bs p} \gg \bs 0$,  if and only if $\mu (BF^*-D-L^*) > 0$; 
    \item the disease-free equilibrium is globally asymptotically stable if and only if $\mu (BF^*-D-L^*) \leq 0$ and is unstable otherwise;
    \item the endemic equilibrium is almost globally asymptotically stable if $\mu (BF^*-D-L^*) > 0$ with region of attraction $\bs p(0) \in [0,1]^{nm}$ such that $\bs p(0) \neq \bs 0_{nm}$.
    \end{enumerate}

\end{theorem}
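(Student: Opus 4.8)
The plan is to reduce the coupled dynamics to its $\bs p$-subsystem and then exploit the fact that this subsystem is monotone and sublinear. Under Assumption~\ref{Assumption:StrongConnectivity}, the population equation $\dot{\bs x}^\alpha = (Q^\alpha)^\top \bs x^\alpha$ is linear, autonomous, and globally exponentially convergent: the mass $\bs 1_n^\top \bs x^\alpha = N^\alpha$ is conserved and $\bs x(t) \to \bs v$. The full system is therefore a cascade driven by a globally convergent subsystem, and I would first invoke the theory of asymptotically autonomous differential equations to show that the long-time behavior of $\bs p$ is governed by the limiting, autonomous system
\[
\dot{\bs p} = (BF^* - D - L^*)\bs p - P\, BF^*\, \bs p =: \bs f(\bs p),
\]
with $P = \operatorname{diag}(\bs p)$ and $F^*, L^*$ as defined before the theorem. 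It then suffices to show that $\bar{\bs p}$ attracts every trajectory of this limiting system started in $[0,1]^{nm} \setminus \{\bs 0_{nm}\}$.

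Second, I would establish the three structural properties that force the global conclusion. For cooperativity, the off-diagonal entries of the Jacobian of $\bs f$ at any $\bs p \in [0,1]^{nm}$ are $(1 - p_i)(BF^*)_{ik} - L^*_{ik}$ for $k \neq i$; since $BF^* \geq 0$, $p_i \leq 1$, and the off-diagonal entries of $-L^*$ are nonnegative, this Jacobian is Metzler, so the flow is monotone on the invariant box $[0,1]^{nm}$ (consistent with statement (i)). For irreducibility, the incidence graph of this Jacobian is strongly connected on the interior: within each class, mobility couples distinct patches through $-L^*$ by strong connectivity of $\mc G^\alpha$, while at each patch the infection term $BF^*$ couples distinct classes through the positive population shares $f^\alpha_i$ (positivity of $\bs v$). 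Hence the flow is strongly monotone on the interior. For sublinearity, a direct computation gives $\bs f(\lambda \bs p) - \lambda \bs f(\bs p) = \lambda(1-\lambda)\, P\, BF^*\, \bs p \geq \bs 0$ for $\lambda \in (0,1)$, with strict inequality in the interior, so $\bs f$ is strictly subhomogeneous.

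Third, with these properties established, I would apply the global-stability principle for strongly monotone, strictly sublinear (concave) vector fields of the Lajmanovich--Yorke / Hirsch--Smith type: such a system has at most one positive equilibrium, and when the origin is unstable---i.e. when the Jacobian of $\bs f$ at $\bs 0$, namely $BF^* - D - L^*$, has positive radial abscissa, which is precisely the standing hypothesis $\mu(BF^* - D - L^*) > 0$ under which statement (iii) already furnishes $\bar{\bs p} \gg \bs 0$---that unique positive equilibrium is globally asymptotically stable relative to the positive orthant minus the origin. Intersecting with the invariant box $[0,1]^{nm}$ and pulling the conclusion back through the cascade yields local stability of $\bar{\bs p}$ together with attraction of every initial condition in $[0,1]^{nm} \setminus \{\bs 0_{nm}\}$, which is the claimed almost global asymptotic stability.

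I expect the main obstacle to lie in two places. The first is making the cascade reduction rigorous: asymptotic autonomy gives attractivity, but transferring full (Lyapunov) stability to the joint $(\bs p, \bs x)$ state, and excluding spurious $\omega$-limit sets created by the transient in $\bs x(t)$, requires care. The second is discharging the hypotheses of the sublinear cooperative theorem near the boundary of the box---strong monotonicity holds automatically on the interior where $0 < p_i < 1$, but one must confirm that $\bar{\bs p}$ is reached as an interior-attracting equilibrium (via the strict sublinearity) even when $D$ has zero entries and some $\bar p_i$ approach $1$. An alternative route, consistent with the paper's stated use of Lyapunov techniques, is a Volterra-type function $V(\bs p) = \sum_k w_k\big(p_k - \bar p_k - \bar p_k \ln(p_k / \bar p_k)\big)$ with weights $w_k$ taken from the left Perron eigenvector of $\operatorname{diag}(\bar{\bs p})^{-1}(BF^* - D - L^*)$; there the main burden becomes showing $\dot V \le 0$ globally despite the bilinear coupling.
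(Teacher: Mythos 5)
Your proposal, as written, proves at most statement (v) (and the instability half of (iv)): you explicitly take statement (iii) as given, statement (i) is only remarked to be ``consistent'' with cooperativity rather than proved, and the globally-asymptotically-stable case $\mu(BF^*-D-L^*)\le 0$ of statement (iv) is not addressed. The paper proves (i) by a boundary-cone argument plus a time-dependent change of variables $\bs y = e^{\int_0^t E}\bs p$, proves (iii) in the appendix via M-matrix theory and Brouwer's fixed point theorem applied to a monotone map $H(\bs p)$, and proves (iv) with a Lyapunov function. So as a proof of the full theorem there is a genuine coverage gap, even though the sublinear-cooperative machinery you invoke could in principle also deliver (iii) and the stable case of (iv) if you spelled that out.

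For statement (v) itself your route is genuinely different from the paper's and is viable: the structural verifications are correct (the off-diagonal Jacobian entries $(1-p_i)(BF^*)_{ik}-L^*_{ik}$ are indeed nonnegative on $[0,1]^{nm}$, and the identity $\bs f(\lambda\bs p)-\lambda\bs f(\bs p)=\lambda(1-\lambda)PBF^*\bs p$ gives strict subhomogeneity), and the Lajmanovich--Yorke/Hirsch--Smith theory then yields uniqueness and global attractivity of the positive equilibrium for the \emph{limiting} autonomous system. What your approach buys is conceptual economy and uniqueness of the endemic equilibrium essentially for free; what it costs is exactly the two obstacles you flag, and neither is minor. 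The cascade reduction requires the asymptotically-autonomous-semiflow theorems (Mischaikow--Smith--Thieme) \emph{plus} a uniform persistence argument to exclude trajectories of the full nonautonomous system limiting onto the unstable DFE; this is precisely the role played in the paper by Lemma~\ref{Lemma:p_i tends to 0 implies p tends to 0} together with the instability of $\bs p=\bs 0$. The paper sidesteps asymptotic autonomy entirely by absorbing the transient of $\bs x(t)$ into the Lyapunov-like function through the term $-2nmr_2\int_0^t(B\|\Tilde F\|+\|\Tilde L\|)\,dt$, whose boundedness follows from exponential convergence of $\bs x$, and then closing the argument with Barbalat's lemma; it also proves Lyapunov stability of the joint $(\bs p,\bs x)$ equilibrium separately, which your monotone-systems argument does not automatically provide for the coupled state. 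Your fallback Volterra function $\sum_k w_k(p_k-\bar p_k-\bar p_k\ln(p_k/\bar p_k))$ is closer in spirit to the paper's quadratic $V_2=\Tilde{\bs p}^\top R_2\Tilde{\bs p}-\cdots$, but it is undefined on the boundary $p_k=0$, so it cannot by itself give the claimed region of attraction $[0,1]^{nm}\setminus\{\bs 0_{nm}\}$ without an auxiliary argument that trajectories enter the open orthant, which is what statement (i) and Lemma~\ref{Lemma:p_i tends to 0 implies p tends to 0} supply in the paper.
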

\medskip

\begin{proof}
The first part of statement (i) follows from the fact that $\dot{\bs{p}}$ is either  tangent or directed inside of the region $[0,1]^{nm}$ at its boundary which are surfaces with $p^\alpha_i =0$ or $1$ . This can be seen from \eqref{eq_SIS_pi}. For the second part of (i), we rewrite \eqref{eq_p} as:
\begin{equation*}
    \dot{\bs{p}} = ((I-P)BF(\bs x)+A(\bs{x}))\bs{p} - E(t) \bs{p}
\end{equation*}
where $L(\bs x)=C(\bs x)-A(\bs x)$ with $C(\bs x)$ composed of the diagonal terms of $L(\bs x)$, $A(\bs x)$ is the non-negative matrix corresponding to the off-diagonal terms, and $E(t)=C(\bs x(t))+D$ is a diagonal matrix. Now, consider a variable change $\bs y(t) := e^{\int_{0}^{t}E(\tau) d\tau}\bs p(t)$. Differentiating $\bs y (t)$ and using above gives:
\begin{align*}
    \dot{\bs{y}} & = e^{\int_{0}^{t}E(\tau) d\tau}((I-P)BF(\bs x)+A(\bs{x}))e^{\int_{0}^{t}-E(\tau) d\tau}e^{\int_{0}^{t}E(\tau) d\tau}\bs{p} \\
    &=e^{\int_{0}^{t}E(\tau) d\tau}((I-P)BF(\bs x)+A(\bs{x}))e^{\int_{0}^{t}-E(\tau) d\tau} \bs y
\end{align*}
Now, the coefficient matrix of $\bs y$ above is always non-negative and strongly connected. The rest of the proof is same as in \cite[Theorem 4.2 (i)]{meiBullo2017ReviewPaper_DeterministicEpidemicNetworks}.

The second statement follows by inspection.\\
The proof of the third statement is presented in Appendix \ref{Appendix: existence of non-trivial eqb}. 

\noindent\textbf{Stability of disease-free equilibria:}
To prove the fourth statement, we begin by establishing sufficient conditions for instability. The linearization of \eqref{eq_Model} at $(\bs p, \bs x) = (\bs 0, \bs v)$ is
\begin{equation} \label{eq_px linear}
    \begin{bmatrix}
     \dot{\bs p} \\
     \dot{\bs x}
     \end{bmatrix} = \begin{bmatrix}
     BF^*-D-L^* & 0 \\
     0 & Q^\top
     \end{bmatrix}\begin{bmatrix}
        \bs p \\
        \bs x
     \end{bmatrix} .
\end{equation}
Since the system matrix in~\eqref{eq_px linear} is block-diagonal, its eigenvalues are the eigenvalues of the block-diagonal sub-matrices. Further, since radial abscissa $\mu(Q^\top)$ is zero, a sufficient condition for instability of the disease-free equilibrium is that $\mu (BF^*-D-L^*) > 0$.

For the case of $\mu (BF^*-D-L^*) \leq 0$, we now show that the disease-free equilibrium is a globally asymptotically stable equilibrium.
It can be seen from the definitions of matrices $F^*$ and $L^*$, that under Assumption \ref{Assumption:StrongConnectivity}, $(BF^*-D-L^*)$ is an irreducible Metzler matrix. Together with $\mu (BF^*-D-L^*) \leq 0$, implies there exists a positive diagonal matrix $R$ such that 
\[
R(BF^*-D-L^*)+(BF^*-D-L^*)^ \top R = -K,
\]
where $K$ is a positive semi-definite matrix \cite[Proposition 1 (iv), Lemma A.1]{khanafer_Basar2016stabilityEpidemicDirectedGraph}.  Define $\Tilde{L} := L(\bs x)-L^*$, $\Tilde{F} := F(\bs x)-F^*$ and $r := \|R\|$, where $\|\cdot\|$  denotes the the induced two norm of the matrix. 

Since $\bs x(0) \gg 0$, under Assumption~\ref{Assumption:StrongConnectivity}, $x^\alpha_i(t)$ is lower bounded by some positive constant and hence, $\Tilde{L}$ and $\Tilde{F}$ are bounded and continuously differentiable.
Since $\bs x$ is bounded and exponentially converges to $\bs x^*$, it follows that $\|\Tilde{L}(\bs x)\|$ and $\|\Tilde{F}(\bs x)\|$ locally exponentially converge to $0$ and $\int_{0}^{t} \|\Tilde{L}\| d t$ and $\int_{0}^{t} \|\Tilde{F}\| d t$ are bounded for all $t>0$.

Consider the Lyapunov-like function $V(\bs p, t) = \bs p^\top R \bs p - 2 n m r \int_{0}^{t} (B\|\Tilde{F}\| + \|\Tilde{L}\|) d t$. 
It follows from the above arguments that $V$ is bounded. Therefore,
\begin{align}\label{Vdot_trivial}
     \dot{V} & = 2 \bs p^\top R \dot{\bs p} -2 n m r(B\|\Tilde{F}\| + \|\Tilde{L}\|) \nonumber \\
            & = \bs p^\top (R(BF^*-D-L^*)+(BF^*-D-L^*)^\top R) \bs p \nonumber \\
            & \quad + 2 \bs p^\top R (B\Tilde{F}-\Tilde{L})\bs p - 2\bs p^\top R P B F \bs p \nonumber \\
            & \quad-2 n m r(B\|\Tilde{F}\| + \|\Tilde{L}\|) \nonumber \\
            & = -\bs p^\top K \bs p + 2 \bs p^\top R (B\Tilde{F}-\Tilde{L})\bs p - 2\bs p^\top R P B F \bs p \nonumber  \\
            & \quad - 2 n m r(B\|\Tilde{F}\| + \|\Tilde{L}\|)  \nonumber \\
            & \leq -\bs p^\top K \bs p + 2 n m r(B\|\Tilde{F}\| + \|\Tilde{L}\|) \nonumber \\
            & \quad - 2 n m r(B\|\Tilde{F}\| + \|\Tilde{L}\|) - 2 \bs p^\top R P B F\bs p \nonumber  \\
            & \leq - 2 \bs p^\top R P B F \bs p \leq 0 .
\end{align}
Since all the signals and their derivatives are bounded, it follows that $\Ddot{V}(t)$ is bounded and hence $\dot{V}$ is uniformly continuous in $t$. Therefore from Barbalat's lemma and its application to Lyapunov-like functions ~\cite[Lemma 4.3, Chapter 4]{slotine1991applied} it follows that $\dot{V} \rightarrow 0$ as $t \rightarrow \infty$. Consequently, from \eqref{Vdot_trivial}, $\bs p^\top R P B F \bs p \rightarrow 0$. Since $R > 0$, $B > 0$, $F \geq 0$ with $F_{kk}>0$ and $ p_k \geq 0$, $\bs p(t) \rightarrow \bs 0$ as $t \rightarrow \infty$. This establishes global attractivity of the disease-free equilibrium point. We now establish its stability. 

We note that since, for $\bs x \gg 0$,  $(B\|\Tilde{F}\| + \|\Tilde{L}\|)$  is a real analytic function of $\bs x$, $\exists$ a region $\|\bs x - \bs x^*\|<\delta_1$ in which $(B\|\Tilde{F}\| + \|\Tilde{L}\|) \leq k_1\|\bs x - \bs x^*\|$ for some $k_1>0$. Also, since $\bs x - \bs x^*$ is globally exponentially stable, $\|\bs x(t) - \bs x^*\| \leq  k_2 e^{-\alpha t} \|\bs x(0) - \bs x^*\|$ for $k_2$, $\alpha >0$. Thus, if $\|\bs x(0) - \bs x^*\| < \frac{\delta_1}{k_2}$, then $(B\|\Tilde{F}\| + \|\Tilde{L}\|)\leq k_1 k_2 e^{-\alpha t}\|\bs x(0) - \bs x^*\|$. This implies $\int_{0}^{t} (B\|\Tilde{F}\| + \|\Tilde{L}\|) d t \leq \frac{k}{\alpha}\|\bs x(0) - \bs x^*\|$, where $k:=k_1 k_2$.  Now, since $\dot{V}(\bs p, t)\leq 0$, 
\begin{equation*} \label{eq_trivial stability}
\begin{split}
  V(\bs p(0), 0) &= \bs p(0)^\top R \bs p(0) \\
       &\geq V(\bs p(t), t) \\
       &\geq \bs p(t)^\top R \bs p(t) -2\frac{n m r k \|\bs x(0)-\bs x^*\|}{\alpha} \\
       &\geq \subscr{R}{min}\|\bs p(t)\|^2 - 2\frac{n m r k \|\bs x(0)-\bs x^*\|}{\alpha} ,
  \end{split}
\end{equation*}
  where $\subscr{R}{min} = \min_{i} (R_{i})$. Equivalently, 
\begin{equation*}
\begin{split}
  \|\bs p(t)\|^2 &\leq \frac{r}{\subscr{R}{min}} \|\bs p(0)\|^2 + 2\frac{n m r k\|\bs x(0)-\bs x^*\|}{\alpha \subscr{R}{min}}.
  \end{split}
\end{equation*}
It follows using stability of $\bs x$ dynamics, that for any $\epsilon >0$, there exists $\delta >0$ , such that $\| \bs x(0)-\bs x^*\|^2 + \| \bs p(0) \|^2 \leq \delta ^2 \Rightarrow \| \bs p(t)\|^2 + \| \bs x(t)-\bs x^*\|^2 \leq \epsilon ^2$. This establishes stability. Together, global attractivity and stability prove the fourth statement. 

\noindent\textbf{Stability of endemic equilibria:}
Finally, we prove the fifth statement. To this end, we first establish an intermediate result. 
\begin{lemma} \label{Lemma:p_i tends to 0 implies p tends to 0}
For the dynamics~\eqref{eq_p}, if $p^\alpha_{i}(t) \rightarrow 0$ as $t \rightarrow \infty$, for some $i \in \until{n}$ and $\alpha \in \until{m}$, then $\bs p(t) \rightarrow \bs 0$ as $t \to \infty$.\\
\end{lemma}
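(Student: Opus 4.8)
The plan is to exploit the fact that, at the equilibrium $\bs x^* = \bs v$, the $nm$ scalar states $p^\alpha_i$ are coupled into a single strongly connected network, so that the vanishing of any one component must propagate to all of them. Two coupling mechanisms are available: within a fixed class $\alpha$, the mobility term in \eqref{eq_SIS_pi} links $p^\alpha_i$ to $p^\alpha_j$ exactly along the edges of $\mc G^\alpha$, which is strongly connected by Assumption~\ref{Assumption:StrongConnectivity}; across classes at a common node $i$, the states $p^\beta_i$ are coupled through the shared infection term $\beta_i \bar p_i$, since $\bar p_i = \sum_\beta f^\beta_i p^\beta_i$ and every $f^{\beta}_i$ converges to a strictly positive limit (the stationary distribution $\bs v^\beta$ is positive by irreducibility). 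I would first record these two positivity facts together with the consequence that $x^\alpha_i(t)$ is bounded away from $0$, so that all coefficients in \eqref{eq_SIS_pi} are bounded and convergent as $t\to\infty$.

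The engine of the argument is to upgrade the hypothesis $p^\alpha_i(t)\to 0$ to $\dot p^\alpha_i(t)\to 0$. Because $\bs p$ stays in $[0,1]^{nm}$ and $\bs x$ is bounded with $x^\alpha_i$ bounded below, the right-hand side of \eqref{eq_SIS_pi} is $C^1$ with bounded derivative, so $\ddot p^\alpha_i$ is bounded and $\dot p^\alpha_i$ is uniformly continuous; since $p^\alpha_i$ converges, Barbalat's lemma \cite[Lemma 4.3]{slotine1991applied} gives $\dot p^\alpha_i \to 0$. I would then rewrite \eqref{eq_SIS_pi} by grouping its terms into a nonnegative source part $\beta_i \bar p_i(1-p^\alpha_i) + \sum_{j\neq i} q^\alpha_{ji}\frac{x^\alpha_j}{x^\alpha_i}p^\alpha_j$ and a remainder that is a bounded multiple of $p^\alpha_i$ and hence vanishes. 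Passing to the limit, the source part tends to $0$; being a sum of nonnegative terms, each summand vanishes separately, which yields both $\bar p_i(t)\to 0$ and $p^\alpha_j(t)\to 0$ for every in-neighbor $j$ of $i$ in $\mc G^\alpha$.

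With this local step in hand, the conclusion follows by propagation. Applying the local step repeatedly along directed paths into $i$ and using strong connectivity of $\mc G^\alpha$, I obtain $p^\alpha_k(t)\to 0$ for every node $k$, i.e. the whole of class $\alpha$ vanishes. For each node $k$ the local step also produces $\bar p_k(t)\to 0$; since $\bar p_k=\sum_\beta f^\beta_k p^\beta_k$ is a sum of nonnegative terms whose coefficients $f^\beta_k$ are bounded below by a positive constant for large $t$, every $p^\beta_k(t)\to 0$. This exhausts all classes at all nodes, giving $\bs p(t)\to \bs 0$.

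The main obstacle I anticipate is the derivative step: concluding $\dot p^\alpha_i\to 0$ from $p^\alpha_i\to 0$ is not automatic and relies squarely on the uniform continuity of $\dot p^\alpha_i$, which in turn requires the a~priori bounds $\bs p\in[0,1]^{nm}$, the boundedness of $\bs x$, and crucially the strictly positive lower bound on $x^\alpha_i(t)$ that keeps the mobility coefficients $x^\alpha_j/x^\alpha_i$ from blowing up. Once this is secured, the splitting into nonnegative terms and the strong-connectivity propagation are routine.
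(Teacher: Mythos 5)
Your proposal is correct and follows essentially the same route as the paper: Barbalat's lemma to upgrade $p^\alpha_i\to 0$ to $\dot p^\alpha_i\to 0$, then nonnegativity of the remaining terms in \eqref{eq_SIS_pi} to force the in-neighbors' states to vanish, and strong connectivity to propagate across the network. Your treatment is in fact slightly more complete than the paper's, since you make explicit the cross-class coupling (the term $\beta_i\bar p_i(1-p^\alpha_i)$ forces $\bar p_i\to 0$, and positivity of the limiting $f^\beta_i$ then kills every class at that node), whereas the paper compresses this into the phrase ``applying the above argument for each class at each node.''
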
 

\begin{proof}
It can be easily seen from \eqref{eq_SIS_pi} that $\Ddot{p}^\alpha_{i}$ is bounded and hence $\dot{p}^\alpha_{i}$ is uniformly continuous in $t$. Now if $p^\alpha_{i}(t) \rightarrow 0$ as $t \rightarrow \infty$, it follows from Barbalat's lemma \cite[Lemma 4.2]{slotine1991applied} that $\dot{p}^\alpha_{i} \rightarrow 0$. Therefore, from \eqref{eq_SIS_pi} and the fact that $- l^\alpha_{i j}(\bs x) \geq 0$ and $p^\alpha_{i} \geq 0$, it follows that $p^\alpha_{j}(t) \rightarrow 0$ for all $j$ such that $- l^\alpha_{i j} (\bs x) \neq 0$. Using Assumption~\ref{Assumption:StrongConnectivity} and applying the above argument for each class at each node implies
$\bs p(t) \rightarrow \bs 0$. 
\medskip
\end{proof}

Define $\Tilde{\bs p} := \bs p-\bs p^*$, $P^* := \operatorname{diag}(\bs p^*)$ and $\Tilde{P} := \operatorname{diag}(\Tilde{\bs p})$. Then
\begin{equation*}
    \begin{split}
        \dot{\Tilde{\bs p}} & =  (BF-D-L- P BF) \bs p \\
                        & =  (BF^*-D-L^*- P^* BF^*) \bs p^* \\
                        & \quad + (BF^*-D-L^*- P^* BF^*) \Tilde{\bs p} \\
                        & \quad + (B\Tilde{F}- \Tilde{L}) \bs p - PB\tilde{F} \bs p - \Tilde{P}BF^* \bs p \\
                        & = ((I- P^*)BF^*-D-L^*) \Tilde{\bs p} + ((I-P)B\tilde{F} - \Tilde{L}) \bs p \\
                        &\quad - \Tilde{P}BF^* \bs p .
    \end{split}
\end{equation*}
where we have used $(BF^*-D-L^*- P^* B F^*) \bs p^* = \bs 0$, as ($\bs p^*$, $\bs x^*$) is an equilibrium point.

Note that $(BF^*-D-L^*- P^* BF^*)=((I-P^*)BF^*-D-L^*)$ is an irreducible Metzler matrix and $\bs p^* \gg 0$ is its positive eigenvector associated with eigenvalue at zero. Therefore, the Perron-Frobenius theorem for irreducible Metzler matrices \cite{Bullo-book_Networks} implies 
$\mu ((I- P^*)BF^*-D-L^*) = 0$. 
Also, this means there exists a positive-diagonal matrix $R_2$ and a positive semi-definite matrix $K_2$ such that

\begin{align*}
   &R_{2}((I-P^*)BF^*-D-L^*) \\
   & \quad +((I-P^*)BF^*-D-L^*)^\top R_{2} = -K_2 .
\end{align*}

Similar to the proof of the fourth statement, take $V_{2}(\tilde{\bs p}, t) = \Tilde{\bs p}^\top R_{2} \Tilde{\bs p} - 2n m r_{2} \int_{0}^{t} (B\|\Tilde{F}\| + \|\Tilde{L}\|)  d t$, where $r_{2} := \|R_{2}\|$.
Then,
\begin{equation*} \label{Vdot_non-trivial}
\begin{split}
           \dot{V_{2}} & = 2 \tilde{\bs p}^\top R_{2} \dot{\tilde{\bs p}} -2n m r_{2} (B\|\Tilde{F}\| + \|\Tilde{L}\|) \\
            & = \tilde{\bs p}^\top (R_{2}((I-P^*)BF^*-D-L^*)\\
            &\quad +((I-P^*)BF^*-D-L^*)^\top R_{2}) \tilde{\bs p} \\
            & \quad + 2 \tilde{\bs p}^\top R_{2}((I-P)B\tilde{F}-\Tilde{L})\bs p - 2 \tilde{\bs p}^\top R_{2} \tilde{P} B F^*\bs p \\
            & \quad -2n m r_{2}(B\|\Tilde{F}\| + \|\Tilde{L}\|)\\
            & = -\tilde{\bs p}^\top K_{2} \tilde{\bs p} + 2 \tilde{\bs p}^\top R_{2}((I-P)B\tilde{F}-\Tilde{L})\bs p \\
            & \quad - 2 \tilde{\bs p}^\top R_{2} \tilde{P} B F^*\bs p -2n m r_{2}(B\|\Tilde{F}\| + \|\Tilde{L}\|)\\
            & \leq -\tilde{\bs p}^\top K_{2} \tilde{\bs p}  + 2 n m r_{2} (B\|\Tilde{F}\| + \|\Tilde{L}\|) \\
            & \quad- 2 n m r_{2} (B\|\Tilde{F}\| + \|\Tilde{L}\|)
            - 2 \tilde{\bs p}^\top R_{2} \tilde{P} BF^* \bs p\\
            & \leq - 2 \tilde{\bs p}^\top R_{2} \tilde{P} B F^* \bs p \\
            & \leq -2\displaystyle\sum_{k=1}^{n m} (R_2)_k\beta_k F^*_{k k}\tilde{p}_{k}^2 p_k \leq 0 .
\end{split}
\end{equation*}
The last inequality above follows from the fact that $R_2>0$, $B>0$ and $F\geq 0$ with diagonal terms $F_{k k}>0$. It can be easily shown that $\Ddot{V}_{2}$ is bounded implying $\dot{V}_{2}$ is uniformly continuous. Applying Barbalat's lemma \cite[Lemma 4.2]{slotine1991applied} gives $\dot{V}_{2} \rightarrow 0$ as $t \rightarrow \infty$. This implies that $\tilde{p}_{k} p_{k} \rightarrow 0$. Using Lemma \ref{Lemma:p_i tends to 0 implies p tends to 0}, and the fact that $\bs p= \bs 0$ is an unstable equilibrium for $\mu (BF^*-D-L^*) > 0$, we have $\tilde{\bs p} \rightarrow \bs 0$ as long as $\bs p(0) \neq \bs 0$.
Stability can be established similarly to the disease-free equilibrium case. This concludes the proof of the theorem. 
\end{proof}

\begin{corollary}[\bit{Stability of disease-free equilibria}] \label{cor:dis-free}
For the SIS epidemic model under Multi-layer Markovian mobility~\eqref{eq_Model} with Assumption~\ref{Assumption:StrongConnectivity} and the disease-free equilibrium $(\bs p^*, \bs x^*)= (\bs 0, \bs v)$ the following statements hold
\begin{enumerate}
    \item a necessary condition for stability is that for each $i \in \until{n}$, $\exists \alpha \in \until{m}$ such that $\delta_{i} > \beta_{i} - \nu^\alpha_{i}$; 
    \item  a necessary condition for stability is that there exists some $i \in \until{n}$ such that $\delta_i \geq \beta_i$; 
    \item a sufficient condition for stability is $\delta_{i} \geq \beta_{i}$, for each $i \in \until{n}$; 
    \item a sufficient condition for stability is 
    \[
    \frac{\lambda_{2}}{\Big(1+\sqrt{1+\frac{\lambda_{2}}{\sum_{i} w_{i}\big(\delta_{i}-\beta_{i}-s\big)}}\Big)^2 nm + 1} + s \geq 0,
    \]
    where 
    $\bs w$ is a positive left eigenvector of $(BM+L^*)$ such that $\bs w^\top(BM+ L^*) = 0$ with $\max_{i} w_{i} = 1$, $s = \min_{i} (\delta_{i}-\beta_{i})$, $W = \operatorname{diag} (\bs w)$, and  $\lambda_{2}$ is the second smallest eigenvalue of $\frac{1}{2}\big(W(BM+ L^*) +(BM+ L^*)^\top W\big)$. 
\end{enumerate}
\end{corollary}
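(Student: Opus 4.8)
The plan is to show that the stated inequality forces $\mu(BF^*-D-L^*)\le 0$, which by statement (iv) of Theorem~\ref{thm:stability} is exactly stability of the disease-free equilibrium. Using the splitting $F^*=I-M$, I first rewrite $A:=BF^*-D-L^*=(B-D)-(BM+L^*)=(B-D)-G$, where $G:=BM+L^*$ is the matrix appearing in the corollary. The first ingredient is a weighted Lyapunov criterion: for any positive-definite diagonal $W$, if $WA+A^\top W\preceq 0$ then $\mu(A)\le 0$. This follows because for any eigenpair $(\lambda,\bs z)$ of $A$ one computes $\bs z^*(WA+A^\top W)\bs z=2\,\mathrm{Re}(\lambda)\,\bs z^* W\bs z$ with $\bs z^* W\bs z>0$, so $WA+A^\top W\preceq 0$ makes every eigenvalue have nonpositive real part.

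Next I would instantiate $W=\operatorname{diag}(\bs w)$ with $\bs w$ the positive left null vector of $G$ from the statement ($\bs w^\top G=\bs 0$, $\max_i w_i=1$). Since $F^*\bs 1=\bs 1$ gives $M\bs 1=\bs 0$, and each $L^\alpha\bs 1=\bs 0$, we have $G\bs 1=\bs 0$; together with $\bs w^\top G=\bs 0$ this makes the symmetric matrix $S:=\tfrac12(WG+G^\top W)$ satisfy $S\bs 1=\bs 0$ and have nonpositive off-diagonal entries. Hence $S$ is a weighted graph Laplacian, which is connected under Assumption~\ref{Assumption:StrongConnectivity}; it is therefore positive semidefinite with a simple zero eigenvalue (eigenvector $\bs 1$) and second smallest eigenvalue $\lambda_2>0$, matching the definition in the corollary. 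A direct computation gives $WA+A^\top W=-2\big(S+W\Delta\big)$ with $\Delta:=D-B$, so it suffices to prove $S+W\Delta\succeq 0$.

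To bring in $s$ and the weighted sum, I would split $\Delta=sI+\Delta_0$ with $\Delta_0=\operatorname{diag}(\delta_i-\beta_i-s)\ge 0$. When $s\ge 0$ the claim is immediate since $W\Delta=sW+W\Delta_0\succeq 0$. When $s<0$, using $W\preceq I$ (from $\max_i w_i=1$) gives $sW\succeq sI$, so $S+W\Delta\succeq (S+W\Delta_0)+sI$, and it suffices to show $\lambda_{\min}(S+W\Delta_0)\ge -s$. The hypothesis is precisely $\gamma+s\ge 0$ with $\gamma$ the stated constant, so the whole proof reduces to the eigenvalue lower bound $\lambda_{\min}(S+W\Delta_0)\ge \gamma$.

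The crux, and the step I expect to be the main obstacle, is this last bound. I would take a unit vector $\bs v$, assume $c\ge0$ without loss of generality, decompose $\bs v=c\hat{\bs 1}+\bs u$ with $\hat{\bs 1}=\bs 1/\sqrt{nm}$ and $\bs u\perp\bs 1$, and use $\bs v^\top S\bs v\ge\lambda_2\|\bs u\|^2$ together with the Cauchy--Schwarz estimate $\bs v^\top W\Delta_0\bs v\ge (\bs 1^\top W\Delta_0\bs v)^2/\sigma$, where $\sigma=\bs 1^\top W\Delta_0\bs 1=\sum_i w_i(\delta_i-\beta_i-s)$. Bounding the cross term $|\bs 1^\top W\Delta_0\bs u|\le\sigma\|\bs u\|$ yields, in the regime $c\ge\sqrt{nm}\,\|\bs u\|$, the scalar bound $\bs v^\top(S+W\Delta_0)\bs v\ge \lambda_2\|\bs u\|^2+\sigma\big(c/\sqrt{nm}-\|\bs u\|\big)^2$; the complementary cases ($\|\bs u\|=0$, giving quotient $\sigma/nm\ge\gamma$, and $c<\sqrt{nm}\|\bs u\|$, giving $\|\bs u\|^2>1/(nm+1)$ and hence quotient $>\lambda_2/(nm+1)\ge\gamma$) are dealt with directly. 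Writing $\phi=c/(\sqrt{nm}\,\|\bs u\|)$, this reduces the problem to minimizing $\big(\lambda_2+\sigma(\phi-1)^2\big)/\big(nm\,\phi^2+1\big)$ over $\phi\ge 1$. I would then split at the threshold $\phi=q:=1+\sqrt{1+\lambda_2/\sigma}$, chosen so that $(q-1)^2=1+\lambda_2/\sigma$: for $\phi\le q$, discarding the nonnegative $\sigma(\phi-1)^2$ term and using $nm\phi^2+1\le nmq^2+1$ gives exactly the quotient $\lambda_2/(nmq^2+1)=\gamma$; for $\phi\ge q$ the term $\sigma(\phi-1)^2\ge\sigma+\lambda_2$ dominates and the quotient stays above $\gamma$ (using $\gamma\le\sigma/nm$ and convexity). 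The delicate points are selecting this threshold so the closed form matches, verifying the $\phi\ge q$ branch over the entire range, and confirming that the dimension factor $nm$ enters only through $\|\bs 1\|^2=nm$; stability (as opposed to mere attractivity) then follows exactly as in the proof of Theorem~\ref{thm:stability}(iv).
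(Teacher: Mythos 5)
Your proposal addresses only statement (iv) of the corollary (with statement (iii) falling out as the byproduct ``$s\ge 0$ makes $W\Delta\succeq 0$''). Statements (i) and (ii) --- the two \emph{necessary} conditions --- are not touched, and they do not follow from your framework: the paper proves them by taking the positive Perron eigenvector $\bs y$ of the irreducible Metzler matrix $BF^*-D-L^*$, writing the eigenvalue relation component-wise at a suitably chosen minimal component $y^{k_i}_i$, and exploiting $\sum_\alpha f^{*\alpha}_i=1$, $l^{*\alpha}_{ij}\le 0$ and $\nu^{\alpha}_i+\sum_{j\ne i}l^{*\alpha}_{ij}=0$. If you intend to prove the corollary as stated, that argument (or an equivalent one) must be supplied.

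For statement (iv) itself, your route is correct but genuinely different from the paper's. The paper reduces, exactly as you do, to the matrix $BM+L^*+\Delta+sI$ with $\Delta=D-B-sI\ge 0$, but then simply \emph{cites} the eigenvalue bound for an irreducible Laplacian perturbed by a nonnegative diagonal matrix (Wu, Theorem~6) to get $\mathrm{Re}(\lambda)\ge\lambda_2/\big((1+\sqrt{1+\lambda_2/\sum_i w_i\Delta_i})^2nm+1\big)$. You instead re-derive (the symmetrized form of) that bound from scratch: the weighted Lyapunov inequality $WA+A^\top W\preceq 0\Rightarrow\mu(A)\le 0$, the identification of $S=\tfrac12(WG+G^\top W)$ as a connected symmetric Laplacian, the reduction $S+W\Delta\succeq S+W\Delta_0+sI$ via $W\preceq I$, and then the Rayleigh-quotient optimization. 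I checked the optimization: the function $\sigma(\phi-1)^2/(nm\phi^2+1)$ is increasing on $[1,\infty)$ (its derivative is proportional to $2(\phi-1)(nm\phi+1)>0$), so your $\phi\ge q$ branch gives at least $(\sigma+\lambda_2)/(nmq^2+1)\ge\gamma$, and the $\|\bs u\|=0$ and $c<\sqrt{nm}\|\bs u\|$ cases close as you say; the threshold $q=1+\sqrt{1+\lambda_2/\sigma}$ reproduces the stated constant. The only loose end is the degenerate case $\Delta_0=0$ (all $\delta_i-\beta_i$ equal), where $\sigma=0$ and the Cauchy--Schwarz step is vacuous --- there the condition collapses to $s\ge 0$, i.e.\ statement (iii), and should be handled separately (the cited theorem likewise assumes $\Delta\neq 0$). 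What the two approaches buy: the paper's proof is three lines once the external bound is accepted; yours is self-contained, makes transparent where the dimension factor $nm$ and the constant $q$ come from, and in passing yields statement (iii) without invoking Gershgorin.
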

\begin{proof}
We begin by proving the first two statements. First, we note that $(L^\alpha)^*_{ii} = \nu^\alpha_i$. This can be verified by evaluating $L^*=L(\bs v)$ and utilising the fact that $Q^\top \bs v = \bs 0$. The necessary and sufficient condition for the stability of disease-free equilibrium is $\mu (BF^*-D-L^*) \leq 0$. Note that $BF^*-D-L^*$ is an irreducible Metzler matrix. Perron-Frobenius theorem for irreducible Metzler matrices implies that there exists a real eigenvalue equal to $\mu$ with positive eigenvector, i.e.,
$(BF^*-D-L^*)\bs y = \mu \bs y $, where $\bs y \gg \bs 0 $. Rename components of $\bs y$ as $y_{(n\alpha+i)}=y^\alpha_i$ to write $\bs y = [(\bs y^1)^\top,\dots,(\bs y^m)^\top]^\top$. 

Let for each $i \in \until{n}$, $y^{k_i}_i=\min \{y^1_i,\dots,y^m_i\}$. Since $\mu \leq 0$, written component-wise for $(nk_i+i)$-th component

\begin{align*}
        & \sum_\alpha \beta_{i}f^{*\alpha}_i y^\alpha_i - (\delta_{i} +\nu^{k_i}_i)y^{k_i}_{i} - \displaystyle\sum_{j\neq i} l^{*k_i}_{i j}y^{k_i}_j \leq 0 \nonumber  \\
        & \Rightarrow \sum_\alpha \beta_{i}f^{*\alpha}_i y^{k_i}_i + \sum_\alpha \beta_{i}f^{*\alpha}_i (y^\alpha_i-y^{k_i}_i)- (\delta_{i} +\nu^{k_i}_i)y^{k_i}_i\nonumber\\
        & - \displaystyle\sum_{j\neq i} l^{*k_i}_{i j}y^{k_i}_j \leq 0 \nonumber\\
        & \Rightarrow (\beta_{i} - \delta_{i}-\nu^{k_i}_i)y^{k_i}_{i} \leq -\sum_\alpha \beta_{i}f^{*\alpha}_i (y^\alpha_i-y^{k_i}_i) + \displaystyle\sum_{j\neq i} l^{*k_i}_{i j}y^{k_i}_j  \nonumber \\
        & \Rightarrow (\beta_{i} - \delta_{i}-\nu^k_i)y^{k_i}_{i} < 0 \\
        & \Rightarrow \beta_{i} - \delta_{i}-\nu^{k_i}_i < 0 .
    \end{align*}
Here we have used facts: $\sum_\alpha f^{*\alpha}_i = 1$, $f^{*\alpha}_i>0 $, $ l^{*k_i}_{ij}\leq 0$ and that there exists $j \in \until{n}$ such that $ l^{*k_i}_{ij}<0$. This proves the statement (i). 

Let $y^{k_i}_i$ be $\min\{y^1_1,\dots,y^m_n\}$. Similar to the proof of the first statement

\begin{align*}
        & \sum_\alpha \beta_{i}f^{*\alpha}_i y^\alpha_i - (\delta_{i} +\nu^{k_i}_{i})y^{k_i}_{i} - \displaystyle\sum_{j\neq i} l^{*k_i}_{i j}y^{k_i}_j \leq 0 \nonumber  \\
        & \Rightarrow \sum_\alpha \beta_{i}f^{*\alpha}_i y^{k_i}_i + \sum_\alpha \beta_{i}f^{*\alpha}_i (y^\alpha_i-y^{k_i}_i)- (\delta_{i} +\nu^{k_i}_{i})y^{k_i}_i\nonumber\\
        & - \displaystyle\sum_{j\neq i} l^{*k_i}_{i j}y^{k_i}_i - \displaystyle\sum_{j\neq i} l^{*k_i}_{i j}(y^{k_i}_j-y^{k_i}_i) \leq 0 \nonumber\\
        & \Rightarrow (\beta_{i} - \delta_{i})y^{k_i}_{i} \leq -\sum_\alpha \beta_{i}f^{*\alpha}_i (y^\alpha_i-y^{k_i}_i) + \displaystyle\sum_{j\neq i} l^{*k_i}_{i j}(y^{k_i}_j -y^{k_i}_i) \nonumber \\
        & \Rightarrow (\beta_{i} - \delta_{i})y^{k_i}_{i} \leq 0 \\
        & \Rightarrow \beta_{i} - \delta_{i} \leq 0 .
    \end{align*}

Here we have used an additional fact:  $\nu^{k_i}_i+ \displaystyle\sum_{j\neq i} l^{*k_i}_{i j}=0$. This proves statement (ii).\\

Let $F^*=I-M$ where $M$ is a Laplacian matrix which can be seen from the definition of $F$. Now $BF^*-D-L^* = B-D-(BM+L^*)$. Since $(BM+L^*)$ is an irreducible Laplacian matrix, if $\delta_i \geq \beta_i$, for each $i \in \until{n}$, from Gershgorin disks theorem \cite{Bullo-book_Networks}, $\mu \leq 0$, which proves the third statement.

For the last statement, we use an eigenvalue bound for perturbed irreducible Laplacian matrix of a digraph~ \cite[Theorem 6]{wu2005bounds}, stated below:

Let $H = A + \Delta$, where $A$ is an $n\times n$ irreducible Laplacian matrix and $\Delta \neq 0$ is a non-negative diagonal matrix, then  
\begin{equation*}
\begin{split}
  \mathrm{Re}(\lambda(H)) \geq \frac{\lambda_{2}}{\Big(1+\sqrt{1+\frac{\lambda_{2}}{\sum_{i} w_{i}\Delta_i}}\Big)^2 n + 1} > 0,
  \end{split}
\end{equation*}
where, $\bs w$ is a positive left eigenvector of $A$ such that $\bs w^\top A = 0$ with $\max_{i} w_{i} = 1$, $W = \operatorname{diag} (\bs w)$, and  $\lambda_{2}$ is the second smallest eigenvalue of $\frac{1}{2}(W A + A^\top W)$.\\
Now, in our case necessary and sufficient condition for stability of disease-free equilibrium is:
\begin{equation*}
\begin{split}
  \mathrm{Re}(\lambda(BM+L^*+D-B)) & = \mathrm{Re}(\lambda(BM+L^*+\Delta + sI)) \\
  & = \mathrm{Re}(\lambda(BM+L^*+\Delta)) + s \\
  & \geq 0,
  \end{split}
\end{equation*}
where, $s = \min_{i} (\delta_{i}-\beta_{i})$ and $\Delta=D-B-sI$. Applying the eigenvalue bound with $H=BM+L^*+\Delta$ gives the sufficient condition (iv). 
\end{proof}

\medskip
\begin{remark}
For given graphs and the associated mobility transition rates in dynamics~\eqref{eq_Model}, let $s = \operatorname{min}_{i} (\delta_{i}-\beta_{i})$ and $i^*= \argmin_{i} (\delta_{i}-\beta_{i})$. Then, there exist $\delta_i$'s, $i\neq i^*$, that satisfy statement (iv) of Corollary \ref{cor:dis-free} if $s > \subscr{s}{lower}$, where 
\[
\subscr{s}{lower}=-\frac{\lambda_2}{4mn+1}.
\] \oprocend
\end{remark}

\begin{remark}(\bit{Influence of mobility on stability of disease-free equilibrium.})
The statement (iv) of Corollary~\ref{cor:dis-free} characterizes the influence of mobility on the stability of disease-free equilibria. In particular, $\lambda_2$ is a measure of ``intensity" of mobility and $s$ is a measure of largest deficit in the recovery rate compared with infection rate among nodes. The sufficient condition in statement (iv) states explicitly how mobility can allow for stability of disease-free equilibrium even under deficit in recovery rate at some nodes. \oprocend
\end{remark}

\section{Numerical Illustrations} \label{Sec: numerical studies}

We start with numerical simulation of epidemic model with multi-layer mobility in which we treat epidemic spread as well as mobility as stochastic processes. The fraction of infected populations for different cases are shown in Fig.~\ref{fig:Stochastic}. The corresponding simulations of the deterministic model as per Proposition \ref{prop:model} are also shown for comparison. We take two mobility network layers: a complete graph and a line graph with the mobility transition rates being equal among out going neighbors of a node for both the graphs. The two cases relate to the stable disease-free equilibrium and stable endemic equilibrium respectively.  If the curing rates, infection rates and the initial fraction of infected population are the same for all the nodes, mobility does not play any role. Therefore, we have chosen heterogeneous curing or infection rates to elucidate the influence of mobility.  Figure~\ref{fig:Stochastic}~(a) corresponds to the case $\delta_i \geq \beta_i$ for each $i$, whereas Fig.~\ref{fig:Stochastic}~(c) corresponds to the case $\delta_i< \beta_i$ for each $i$. The results support statements (iii) and (ii) of Corollary \ref{cor:dis-free} and lead to, respectively, the stable disease-free equilibrium and the stable endemic equilibrium.

\begingroup
\centering
\begin{figure}[ht!]
\centering
\subfigure[Stable disease-free equilibrium: Stochastic model]{\includegraphics[width=0.23\textwidth]{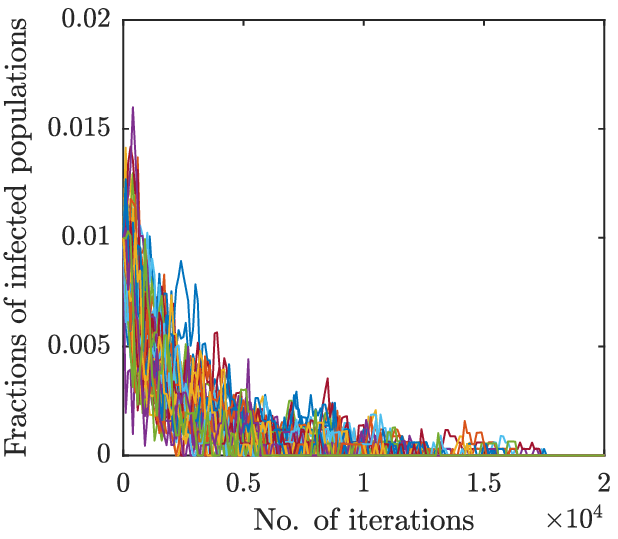}}\label{fig:Stochastic_a}
\subfigure[Stable disease-free equilibrium: Deterministic model]{\includegraphics[width=0.23\textwidth]{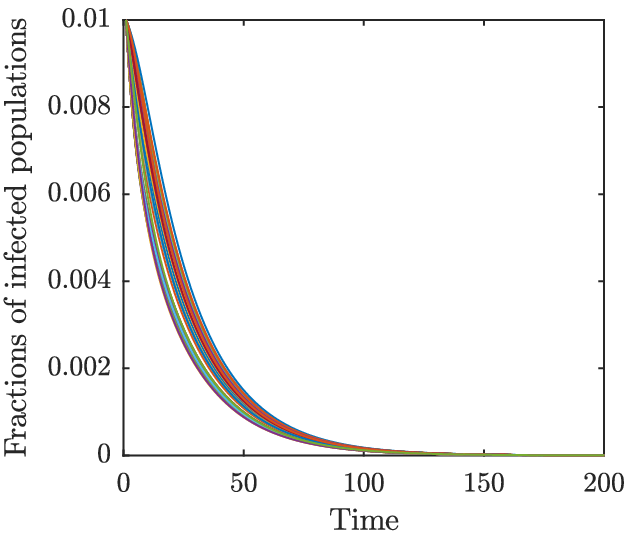}}\label{fig:Stochastic_b}
\subfigure[Stable endemic equilibrium: Stochastic model]{\includegraphics[width=0.23\textwidth]{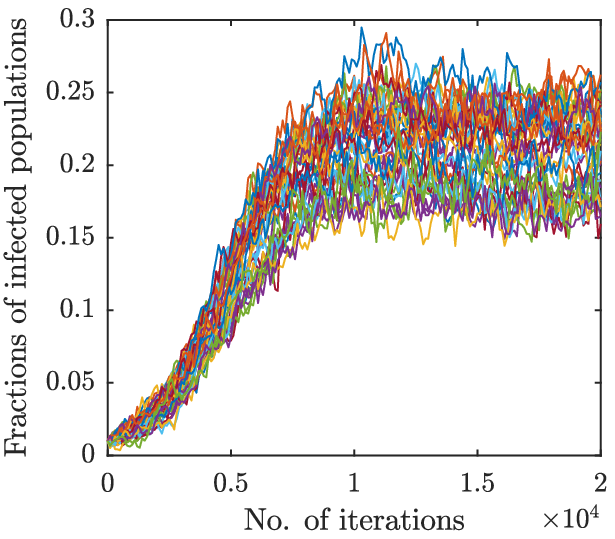}}\label{fig:Stochastic_c}
\subfigure[Stable endemic equilibrium: Deterministic model]{\includegraphics[width=0.23\textwidth]{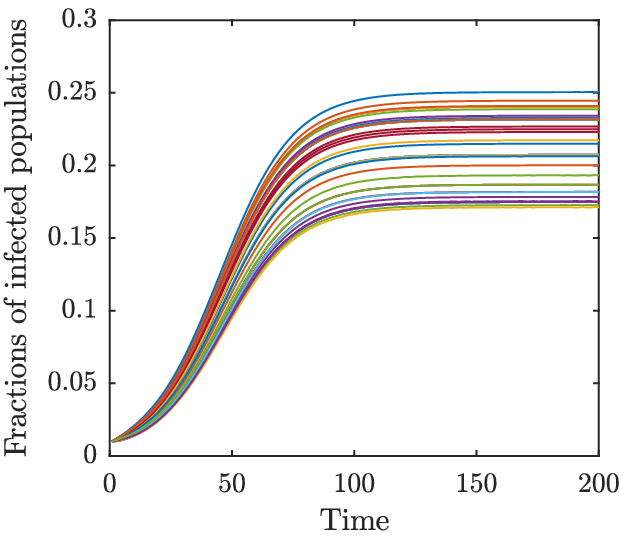}}\label{fig:Stochastic_d}
\caption{Stochastic simulation of epidemic spread under mobility. Complete-Line graphs, $n=20$, $\nu(i) = 0.2$, $q_{i j}=\frac{\nu(i)}{D_{out}}$, $p_i(0)=0.01$. Each iteration in stochastic model corresponds to time-step $0.01$ sec.}
\label{fig:Stochastic}
\end{figure}
\endgroup
Once we have established the correctness of deterministic model predictions with the stochastic simulations, we study the simulations using only the deterministic model. We study the effect of multi-layer mobility over different pairs of mobility graph structures - line-line graph, line-ring graph and line-star graph. We choose different population size for the two mobility layers and take the mobility transition rates such as to keep the equilibrium distribution of population the same for both the layers across all pairs (taken as uniform equilibrium distribution) by using instantaneous transition rates from Metropolis-Hastings algorithm \cite{Hastings_MetroplisHastingsMC}. This shows the effect of different mobility graph structure on epidemic spread while the equilibrium population distribution remains the same. Fig.~\ref{fig:Deterministic_SameMobilityEqbDist} shows the fractions of infected population trajectories for $10$ nodes connected with different pairs of graph structures. The values of equilibrium fractions are affected by the presence of mobility and are different for different graph structures.

\begingroup
\centering
\begin{figure}[ht!]
\centering
\subfigure[Line-Line graphs; graph 1]{\includegraphics[width=0.23\textwidth]{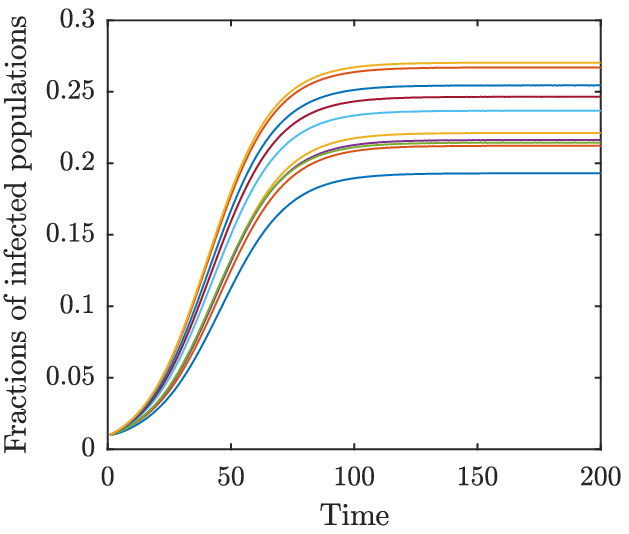}}\label{fig:Samedist_L-L_1}
\subfigure[Line-Line graphs; graph 2]{\includegraphics[width=0.23\textwidth]{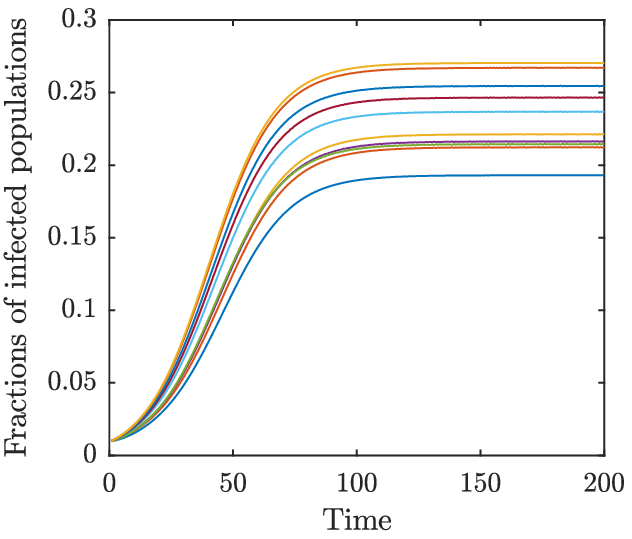}}\label{fig:Samedist_L-L_2}
\subfigure[Line-Ring graphs; graph 1]{\includegraphics[width=0.23\textwidth]{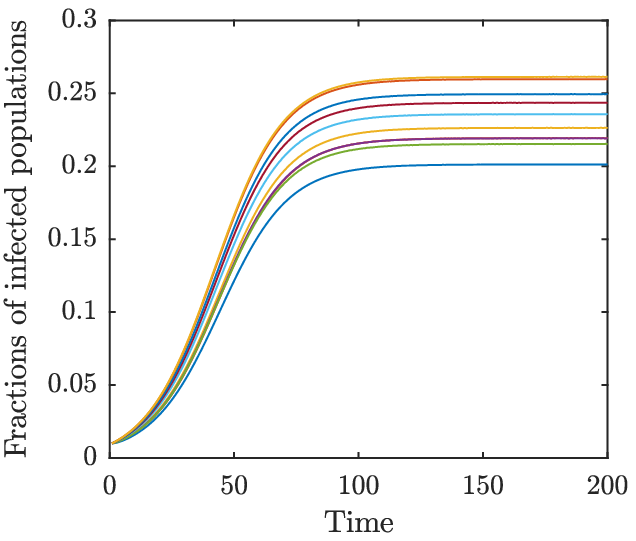}}\label{fig:Samedist_L-R_1}
\subfigure[Line-Ring graphs; graph 2]{\includegraphics[width=0.23\textwidth]{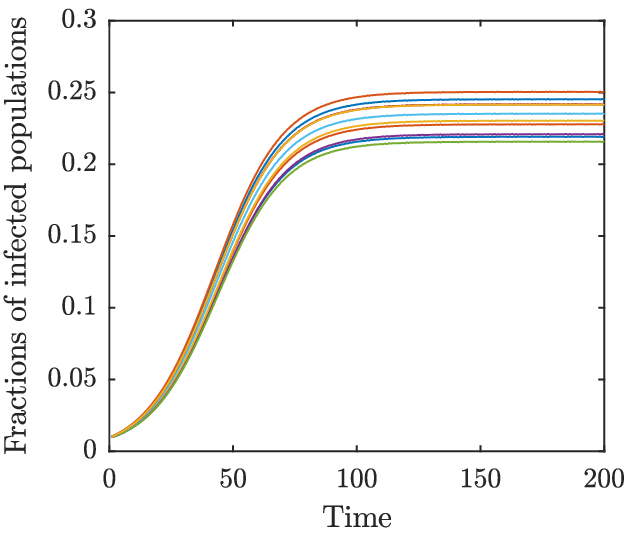}}\label{fig:Samedist_L-R_2}
\subfigure[Line-Star graphs; graph 1]{\includegraphics[width=0.23\textwidth]{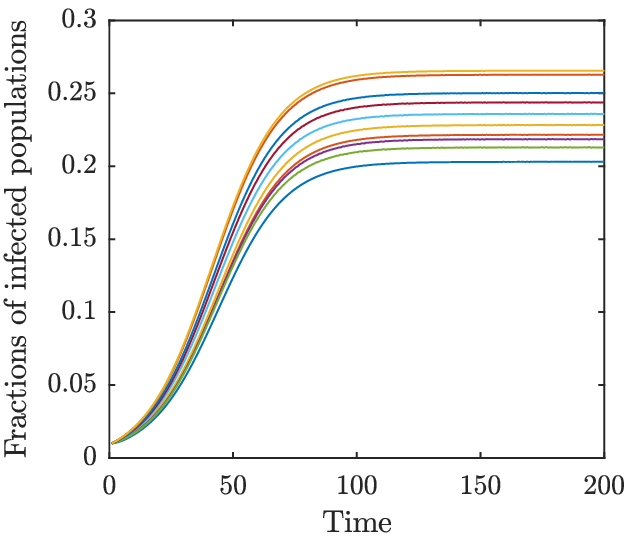}}\label{fig:Samedist_L-S_1}
\subfigure[Line-Star graphs; graph 2]{\includegraphics[width=0.23\textwidth]{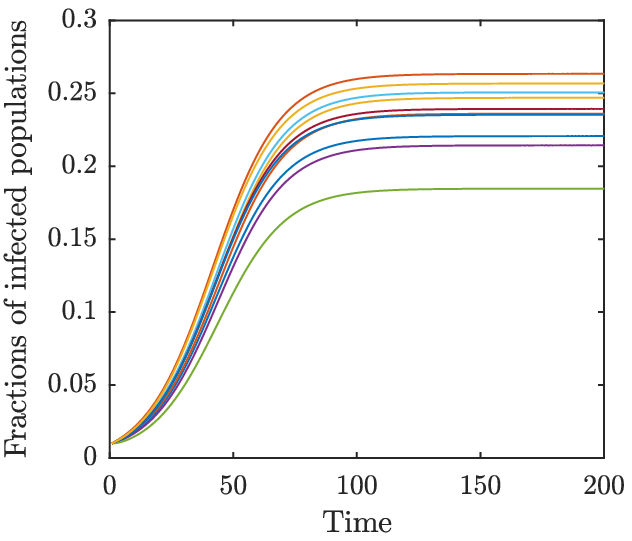}}\label{fig:Samedist_L-S_2}
\caption{Simulation of deterministic model of epidemic spread under 2 layer mobility, over different graph structure with stable endemic equilibrium. $n=10$, $p_i(0)=0.01$.}
\label{fig:Deterministic_SameMobilityEqbDist}
\end{figure}
\endgroup

Next, we verify the statement (iv) of Corollary \ref{cor:dis-free}, for a single layer mobility model, where one can have some curing rates $\delta_i$ less than the infection rates $\beta_i$ but still have stable disease-free equilibrium. We take a complete graph of $n=20$ nodes with given mobility transition rates which give us $\bs w$, $L^*$ and $\lambda_2$. We take a given set of values of $\beta_i$. Next, we compute $\subscr{s}{lower} = -\frac{\lambda_2}{4nm+1}$ and take $0.8$ times of this value as $s$ in order to compute $\delta_i$'s that satisfy statement (iv) of Corollary \ref{cor:dis-free}. For our case the values are: $\beta_i=0.3$, $\lambda_2=0.2105$, $\subscr{s}{lower}=-0.0026$, $s = 0.8~ \subscr{s}{lower}=-0.0021$, $\delta_1=\delta_n=\beta_i+s$ and the rest $\delta_i$ computed to satisfy the condition which gives $\delta_1 = \delta_n = 0.2979$ and $\delta_i = 0.3198$ for $i \in \{2,\dots,n-1\}$. Fig.~\ref{fig:Lambda2 sufficient cond Complete graph} shows the trajectories of infected fraction populations. As can be seen the trajectories converge to the disease-free equilibrium.

\begin{figure}[ht!]
    \centering
    \includegraphics[width=0.9\linewidth]{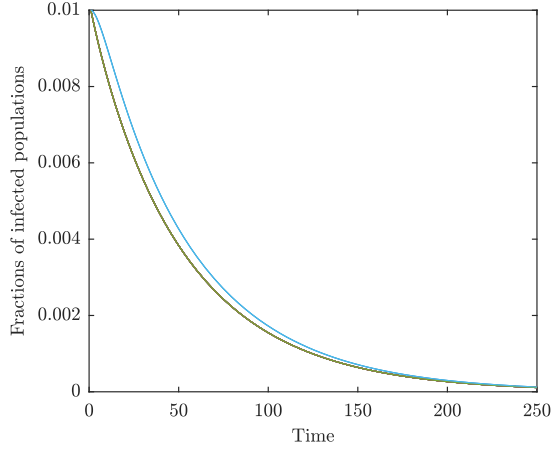}
    \caption{Stable disease-free equilibrium with curing rates computed as per the $\lambda_2$ sufficient condition (statement (iv), Corollary \ref{cor:dis-free}) for stability of disease-free equilibrium. Graph: Complete, $n=20$, $p_i(0)=0.01$.}
    \label{fig:Lambda2 sufficient cond Complete graph}
\end{figure}

\section{Conclusions} \label{Sec: conclusions}

We derived a continuous-time model for epidemic propagation under Markovian mobility across multi-layer network of patches. The epidemic spread within each node has been modeled as SIS population model with individuals traveling across the nodes with different mobility patterns modeled as layers of a multi-layer network. The derived model has been analysed to establish the existence and stability of disease-free equilibrium and an endemic equilibrium under different conditions.
Some necessary and some sufficient conditions for stability of disease-free equilibrium have been established. We also provided numerical studies to support our results and elucidated the effect of mobility on epidemic propagation.

\appendix
\subsection{Proof of Theorem 1 (iii): Existence of an endemic equilibrium} \label{Appendix: existence of non-trivial eqb}
Here we assume that there exists atleast one node with positive recovery rate, i.e., $\delta_i > 0$ for atleast one $i$. The case with no recovery at all nodes is trivial and leads to $\bs p^* = \bs 1$.

We first state some properties of M-matrices, which we will use in the proof.

\begin{theorem}[\bit{Properties of M-matrix, \cite{berman1994nonnegative}}]\label{M-matrix properties}
For a real Z-matrix (i.e., a matrix with all off-diagonal terms non-positive) $A\in \real^{n\times n}$, the following statements are equivalent to $A$ being a non-singular M-matrix
\begin{enumerate}
    \item \bit{Stability}: real part of each eigenvalue of $A$ is positive;
    \item \bit{Inverse positivity}: $A^{-1}\geq0$ (for irreducible $A$, $A^{-1}>0$);
    \item \bit{Regular splitting}: $A$ has a convergent regular splitting, i.e., $A$ has a representation $A=M-N$, where $M^{-1}\geq 0$, $N\geq0$ (called regular splitting), with $M^{-1}N$ convergent, i.e., $\rho(M^{-1}N)<1$;
    \item \bit{Convergent regular splitting}: every regular splitting of $A$ is convergent. Further, for a singular M-matrix (i.e. singular Z-matrix with real part of eigenvalues non-negative) regular splitting of $A$ gives $\rho(M^{-1}N)=1$;
    \item \bit{Semi-positivity}: there exists $\bs x \gg \bs 0$ such that $A\bs x \gg \bs 0$;
     \item \bit{Modified semi-positivity}: there exists $\bs x \gg \bs 0$ such that $\bs y = A\bs x > \bs 0$ and matrix $\hat A$ defined by
     \begin{equation*}
         \hat A_{ij}=\begin{cases}
                      1  & \text{if } A_{ij}\neq 0 \text{ or } y_i \neq 0,\\
                      0  & \text{otherwise}
                      \end{cases}
     \end{equation*}
     is irreducible.
    
\end{enumerate}

\end{theorem}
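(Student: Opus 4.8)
The plan is to treat the standard definition of a non-singular M-matrix as the hub of the equivalence and connect statements (i)--(vi) to it through a web of implications, with Perron--Frobenius theory applied to the nonnegative part of $A$ as the recurring tool. Write the Z-matrix as $A = sI - B$ with $B \geq 0$ and $s > 0$; by definition $A$ is a non-singular M-matrix precisely when $s > \rho(B)$. Since $B\bs w = \rho(B)\bs w$ for a Perron eigenvector $\bs w > \bs 0$, and the eigenvalues of $A$ are $s - \lambda$ as $\lambda$ ranges over those of $B$, the eigenvalue of $A$ with smallest real part is exactly $s - \rho(B)$. Hence $s > \rho(B)$ holds if and only if every eigenvalue of $A$ has positive real part, which establishes the equivalence of the definition with statement (i) at the outset and will be reused below.

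For the positivity block I would run definition $\Rightarrow$ (ii) $\Rightarrow$ (v) $\Rightarrow$ (vi) and close with (vi) $\Rightarrow$ definition. The step definition $\Rightarrow$ (ii) is the Neumann series: since $\rho(B/s) = \rho(B)/s < 1$, we get $A^{-1} = s^{-1}\sum_{k \geq 0}(B/s)^k \geq 0$, and for irreducible $A$ the same sum is strictly positive because some power of the irreducible $B$ reaches every index, giving $A^{-1} > 0$. Conversely (ii) $\Rightarrow$ definition follows from $A^{-1}\bs w = (s - \rho(B))^{-1}\bs w$ (obtained by applying $A^{-1}$ to $(sI - B)\bs w = (s - \rho(B))\bs w$): since $A^{-1} \geq 0$ and $\bs w > \bs 0$, the scalar $(s-\rho(B))^{-1}$ must be positive, forcing $s > \rho(B)$. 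For (ii) $\Rightarrow$ (v) take $\bs x = A^{-1}\bs 1_n$, so that $A\bs x = \bs 1_n \gg \bs 0$ while $\bs x \gg \bs 0$ because the nonsingular matrix $A^{-1} \geq 0$ can have no zero row. Finally (v) $\Rightarrow$ (vi) is immediate: $A\bs x \gg \bs 0$ makes every $y_i$ positive, so $\hat A$ is the all-ones matrix, which is irreducible.

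The splitting block I would handle with Varga's regular-splitting identity. Given the definition we have $A^{-1} \geq 0$ by the previous block, so any regular splitting $A = M - N$ with $M^{-1} \geq 0$, $N \geq 0$ satisfies $A^{-1}N \geq 0$; applying Perron--Frobenius to $A^{-1}N$ then yields $\rho(M^{-1}N) = \rho(A^{-1}N)/(1 + \rho(A^{-1}N)) < 1$, and the same identity produces $\rho(M^{-1}N) = 1$ in the singular M-matrix case, which is the parenthetical assertion in (iv). This proves definition $\Rightarrow$ (iv); since a Z-matrix always admits at least the regular splitting $M = sI$, $N = B$, we get (iv) $\Rightarrow$ (iii) for free. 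To close, (iii) $\Rightarrow$ (ii): a convergent regular splitting gives $A = M(I - M^{-1}N)$ with $\rho(M^{-1}N) < 1$, hence $A^{-1} = \sum_{k \geq 0}(M^{-1}N)^k M^{-1} \geq 0$, which is (ii) and therefore the definition by the argument above.

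The main obstacle is the implication (vi) $\Rightarrow$ definition, that modified semi-positivity already forces $A$ to be a non-singular M-matrix. Writing $A\bs x = \bs y$ as $B\bs x = s\bs x - \bs y \leq s\bs x$ with $\bs x \gg \bs 0$, the Collatz--Wielandt bound gives $\rho(B) \leq \max_i (B\bs x)_i / x_i \leq s$, so $s \geq \rho(B)$ comes cheaply; the difficulty is upgrading this to the strict inequality $s > \rho(B)$. Because $A\bs x$ is only required to be $> \bs 0$ rather than $\gg \bs 0$, the strict inequality $(B\bs x)_i < s x_i$ holds only at the indices where $y_i > 0$, and this local strictness must be \emph{propagated} to all indices along the nonzero pattern encoded by $\hat A$; its irreducibility is exactly what makes the propagation go through and excludes $s = \rho(B)$. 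Carrying out this graph-theoretic propagation cleanly --- rather than the routine Neumann-series and Perron--Frobenius steps elsewhere --- is where the real work lies, and it is precisely why (vi) is listed as a genuinely distinct characterization rather than a restatement of (v).
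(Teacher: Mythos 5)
The paper never proves this theorem: it is quoted as background directly from Berman and Plemmons \cite{berman1994nonnegative}, so there is no in-paper argument to compare against, and your attempt must be judged against the classical proofs. On that score your architecture is the standard one: writing the Z-matrix as $A = sI - B$, $B \geq 0$, the equivalence of the definition $s > \rho(B)$ with (i), the Neumann-series step to (ii), the arrows (ii) $\Rightarrow$ (v) $\Rightarrow$ (vi), the splitting block via Varga's identity $\rho(M^{-1}N) = \rho(A^{-1}N)/(1+\rho(A^{-1}N))$, and the closure (iv) $\Rightarrow$ (iii) $\Rightarrow$ (ii) $\Rightarrow$ definition are all correct as written (invoking Varga's identity as a named classical lemma is acceptable, though a self-contained proof would derive it).

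There is, however, a genuine gap, and you name it yourself without filling it: the implication (vi) $\Rightarrow$ definition is announced as ``where the real work lies'' but never carried out, and your cycle needs precisely this arrow to close the positivity block, since (vi) is the weakest condition in your chain. The claim that irreducibility of $\hat A$ ``is exactly what makes the propagation go through'' is a statement of intent, not a proof. The clean way to finish is dual rather than by forward propagation along the graph: Collatz--Wielandt gives $\rho(B) \leq s$ as you note; if $\rho(B) = s$, take a left Perron vector $\bs z > \bs 0$ of $B$, so $\bs z^\top A = \bs 0^\top$ and hence $\bs z^\top \bs y = \bs z^\top A \bs x = 0$. Since $\bs y > \bs 0$ and $\bs z \geq \bs 0$, every product $z_i y_i$ vanishes, so $z_i = 0$ wherever $y_i > 0$; thus the support $S = \setdef{i}{z_i > 0}$ is a nonempty proper subset of $\until{n}$. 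For $j \notin S$, the $j$-th entry of $\bs z^\top A$ reads $\sum_{i \in S} z_i A_{ij} = 0$ with every term nonpositive (all such $i$ satisfy $i \neq j$), forcing $A_{ij} = 0$ for all $i \in S$, $j \notin S$; combined with $y_i = 0$ on $S$, this makes $\hat A_{ij} = 0$ for all $i \in S$, $j \notin S$, i.e., no edge leaves $S$, contradicting irreducibility of $\hat A$. A second, smaller defect: in (iv) the singular-case parenthetical cannot follow from ``the same identity,'' because Varga's identity presupposes $A^{-1}$ exists. There you must argue separately, e.g., $\rho(M^{-1}N) \geq 1$ because $I - M^{-1}N = M^{-1}A$ is singular, and $\rho(M^{-1}N) \leq 1$ via a nonnegative eigenvector or a perturbation/continuity argument as in \cite{berman1994nonnegative}; as written, that assertion is unsupported.
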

\medskip

A consequence of Theorem \ref{M-matrix properties} (vi) is that an irreducible Laplacian matrix perturbed with a non-negative diagonal matrix with atleast one positive element is a non-singular M-matrix (take $\bs x = \bs 1 \gg \bs 0$). This implies that block diagonal submatrices of the matrix $L^*+D$ are all non-singular M-matrices (since $\delta_i \geq 0$ with strict inequality for atleast one $i$) and hence $L^*+D$ is a non-singular M-matrix. Similar arguments imply $B^{-1}(L^*+D)$ is a non-singular M-matrix.

We show below that in the case of $\mu (BF^*-D-L^*) > 0$ there exists an endemic equilibrium $\bs p^* \gg 0$, i.e.,
\begin{equation} 
    \dot{\bs p} |_{\bs p =\bs  p^*} = (BF^*-D-L^*- P^* BF^*) \bs p^* = 0 .
\end{equation}

We use Brouwer's fixed point theorem, similar to the derivation in \cite{fall2007epidemiological}. We split the non-negative matrix $F^*$ as $F^* = I-M$, where $M$ is a Laplacian matrix. Rearranging the terms and writing the above as an equation in $\bs p$ to be satisfied at $\bs p^*$ leads to

\begin{equation} \label{eqAppendixeqbM}
\begin{split}
    (L^*+D)((L^*+D)^{-1} B - I)\bs p & =  (PB +(I-P)BM) \bs p \\
    & = B(P + (I-P)M) \bs p .
\end{split}
\end{equation}

Define $A := (L^*+D)^{-1} B$. Since $A^{-1} = B^{-1} (L^*+D)$ is a non-singular M-matrix, its inverse $A$ is non-negative \cite{berman1994nonnegative}. Rearranging  \eqref{eqAppendixeqbM} leads to 
\begin{equation}
   \bs p = H (\bs p) = (I + A(P+(I-P)M))^{-1}A \bs p .
\end{equation}

Now we show that $H(\bs p)$ as defined above is a monotonic function in the sense that $\bs p_{2} \geq \bs p_{1}$ implies $H(\bs p_{2}) \geq H(\bs p_{1})$. Define $\tilde{\bs p} := \bs p_2 - \bs p_1$ and $\tilde{P} := \operatorname{diag}(\tilde{\bs p})$. Then,

\begin{equation}
 \begin{split} \label{eqHMultiplex}
       & H(\bs p_{2}) - H(\bs p_{1}) \\
        &=
          \left(A^{-1}+P_{2} + (I-P_{2})M\right)^{-1}\bs p_{2} \\
         &\quad - \left(A^{-1}+P_{1}+(I-P_{1})M\right)^{-1}\bs p_{1} \\
         & = \left(A^{-1}+P_{2} + (I-P_{2})M\right)^{-1}\Big(\bs p_{2} -\\
         &\quad \left(A^{-1}+P_{2}+ (I-P_{2})M\right)\left(A^{-1}+P_{1}+(I-P_{1})M\right)^{-1}\bs p_{1}\Big)\\
         & = \left(A^{-1}+P_{2} + (I-P_{2})M\right)^{-1} \Big(\tilde{\bs p} \\
         &\quad- \tilde{P}(I-M)\left(A^{-1}+P_{1}+(I-P_{1})M\right)^{-1}\bs p_{1}\Big)\\
         & = (A^{-1}+P_{2} + (I-P_{2})M)^{-1} \Big(I \\
         &\quad - \operatorname{diag}\left((I-M)(A^{-1}+P_{1}+(I-P_{1})M)^{-1}\bs p_{1}\right)\Big)\tilde{\bs p}\\
   \end{split}
\end{equation}

Since $(A^{-1}+P_{2} + (I-P_{2})M) = B^{-1}(L^*+D) + P_{2} +(I-P_{2})M$ is a non-singular M-matrix (consider theorem \ref{M-matrix properties} (vi) with $\bs x =\bs 1 \gg \bs 0$), its inverse and hence the first term above is non-negative. The second term is shown to be non-negative as below

\begin{equation} \label{eqIAPMultiplex}
    \begin{split}
       & \Big(I - \operatorname{diag}\left((I-M)(A^{-1}+P_{1}+(I-P_{1})M)^{-1}\bs p_{1}\right)\Big) \\
        & = \Big(I - \operatorname{diag}\left((I-M)(A^{-1}+P_{1}+(I-P_{1})M)^{-1} P_{1} \bs 1\right)\Big) \\
        & = \operatorname{diag}\Big(\left(I - (I-M)(I + A P_{1} + A (I-P_{1})M )^{-1} A P_{1}\right) \bs 1\Big) \\
         & = \operatorname{diag}\Big(\big(I - M - (I-M)(I + A P_{1} + A (I-P_{1})M )^{-1} \\
         & \quad \left(A P_{1} + A (I-P_{1})M\right)\big) \bs 1\Big) \\
        & = \operatorname{diag}\Big((I-M)\big(I - (I + A P_{1} + A (I-P_{1})M )^{-1} \\
         & \quad \left(A P_{1} + A (I-P_{1})M \right)\big) \bs 1\Big) \\
         & = \operatorname{diag}\Big((I-M)\left(I + A P_{1} + A (I-P_{1})M \right)^{-1} \bs 1\Big) \\
         & = \operatorname{diag}\Big(F^* \big(A^{-1} + P_{1} + (I-P_{1})M \big)^{-1} A^{-1} \bs 1\Big) \\
         & \geq 0 ,
    \end{split}
\end{equation}
where we have used the identity
\begin{equation}
    (I + X)^{-1} = I - (I+X)^{-1}X ,
\end{equation}
and $M \bs 1 = \bs 0$ , as $M$ is a Laplacian matrix. The last inequality in \eqref{eqIAPMultiplex} holds as $A^{-1} \bs 1 = B^{-1} (L^* + D) \bs 1 = B^{-1} D \bs 1 \geq \bs 0$ and $(A^{-1} + P_{1} + (I-P_{1})M)^{-1} \geq \bs 0 $, since it is the inverse of an M-matrix. The last term in the last line of \eqref{eqHMultiplex} is $\tilde{\bs p} \geq \bs 0$. This implies that $H(\bs p)$ is a monotonic function. Also, argument similar to above can be used to show that $H(\bs p) \leq \bs 1$ for all $\bs 0 \leq \bs p \leq \bs 1$. Therefore, $H(\bs 1) \leq \bs 1$.\\

Applying the converse of Theorem \ref{M-matrix properties}~(iv), with Z-matrix as $(L^*+D)-BF^*$, where $(L^*+D)^{-1}\geq0$, $BF^*\geq0$ implies $\mu \left(BF^*-(D+L^*)\right) > 0$ if and only if $R_0= \rho(AF^*) = \rho(A(I-M)) > 1$. Now, $A$ is a block-diagonal matrix with block-diagonal terms as $A^{\alpha} = (L^{*\alpha}+D^{\alpha})^{-1}B^{\alpha}$, which are inverse of irreducible non-singular M-matrices and hence are positive. Using the expression for $F$ gives $AF^*= [(A^1)^\top \bar{F}^\top(\bs x^*),\dots,(A^m)^\top \bar{F}^\top(\bs x^*)]^\top$. Since $A^\alpha > 0$ and $\bar F^* \geq 0 $ with no zero column, $AF^*>0$ and hence irreducible.  Since $AF^*$ is an irreducible non-negative matrix, Perron-Frobenius theorem implies $\rho(AF^*)$ is a simple eigenvalue satisfying $AF^* \bs u = \rho (AF^*) \bs u = R_0 \bs u$ with $\bs u \gg \bs 0$. Using $F^*=I-M$ implies:

\begin{equation}
\begin{split}
    A\bs u &= R_0\bs u + AM\bs u\\
    &= (R_0 -1)\bs u + (I+AM)\bs u .
\end{split}
\end{equation}

Define $U :=\operatorname{diag}(\bs u)$ and $\gamma := \frac{R_0 -1}{R_0}$. Putting $\bs p = \epsilon \bs u$ , we show that $\exists$ $\epsilon _0$ such that  $\epsilon \in (0, \epsilon_0)$ implies  $H(\epsilon \bs u)\geq \epsilon \bs{u}$ as below:

\begin{equation}
    \begin{split}
        & H(\epsilon \bs u) - \epsilon \bs u \\
        & = \big(I+\epsilon AU + A(I-\epsilon U)M\big)^{-1} A\epsilon \bs u - \epsilon \bs u \\
        & = \epsilon \Big(\big(I+\epsilon AU + A(I-\epsilon U)M\big)^{-1}(R_0 -1) \bs u \\
        & \quad + \big(I+\epsilon AU + A(I-\epsilon U)M\big)^{-1}(I+AM) \bs u - \bs u\Big) \\
        &\equiv \epsilon K(\epsilon) .
    \end{split}
\end{equation}

Now we evaluate $ K(\epsilon)$ at $\epsilon = 0$ :

\begin{equation}
    \begin{split}
        &K(0) \\
        &= (I+AM)^{-1}(R_0 -1)\bs u + (I+AM)^{-1} (I+AM) \bs u - \bs u \\
        &= (I+AM)^{-1}(R_0 -1) \bs u \\
        &= (R_0 -1)(A^{-1}+M)^{-1}A^{-1} \bs u \\
        &= \frac{(R_0 -1)}{R_0}(A^{-1}+M)^{-1}F^* \bs u \\
        &= \gamma (A^{-1}+M)^{-1}F^* \bs u \\
        & \gg \bs 0 .
    \end{split}
\end{equation}

The last inequality follows as $\gamma$ and $\bs u$ are both positive, and $(A^{-1}+M)^{-1}F^* = (B^{-1}(L+D)+M)^{-1} F^*> 0$ as $B^{-1}(L+D)+M$ is an irreducible M-matrix and hence its inverse is positive and $F^* \geq 0$  with no zero column. 
Since $K(\epsilon)$ is a continuous function of $\epsilon$ , $\exists$ $\epsilon_0$ such that $\epsilon _0 > \epsilon >0$ implies $K(\epsilon) \gg \bs 0$ and therefore, $H(\epsilon \bs u)\geq \epsilon \bs{u}$.
Therefore there exists an $\epsilon > 0 $ such that $H(\epsilon \bs u) - \epsilon \bs u \geq \bs 0$ or equivalently, $H(\epsilon \bs u)\geq \epsilon \bs{u}$. Taking the closed compact set $J = [ \epsilon \bs u, \bs 1]$, $H(\bs p): J \rightarrow J$ is a continuous function of $\bs p$. Brouwer's fixed point theorem implies there exists a fixed point of $H$ in $J$. This proves the existence of an endemic equilibrium $\bs p^* \gg \bs 0$ when $\mu (BF^*-D-L^*) > 0$ or equivalently $R_0 >1$.

\balance
{\footnotesize
\bibliographystyle{ieeetr}
\bibliography{mybib}}

\end{document}